\documentclass[journal]{IEEEtran}

\usepackage{amsmath,amsfonts,amssymb}
\usepackage{graphicx}
\usepackage{bm}
\usepackage{textcomp}
\usepackage{xcolor}
\usepackage{cite}
\newtheorem{lemma}{Lemma}

\newcommand{\paperfigure}[2][]{\IfFileExists{#2}{\includegraphics[#1]{#2}}{\fbox{\parbox[c][1.35in][c]{0.94\linewidth}{\centering Missing figure: \texttt{\detokenize{#2}}}}}}

\newtheorem{proposition}{Proposition}
\newtheorem{remark}{Remark}

\begin{document}
\title{\huge Cold-Atom Rydberg Array with Spatiotemporal Multiplexing for High-Sensitivity Wireless Communications}

\author{Jian Xiao, Tierui Gong, and Chau Yuen,~\IEEEmembership{Fellow,~IEEE}
\thanks{ Jian Xiao, Tierui Gong, and Chau Yuen are with the School of Electrical and Electronics Engineering, Nanyang Technological University, Singapore 639798. (e-mail: jian.xiao@ntu.edu.sg; tierui.gong@ntu.edu.sg; chau.yuen@ntu.edu.sg).}}

\maketitle

\begin{abstract}
An effective Cold-Atom Rydberg Array (CoRA) architecture is investigated, which addresses the fundamental thermal noise limit in classical electromagnetics and intrinsic Doppler broadening as well as photon shot noise in conventional hot-vapor cell-based atomic receivers. However, translating the CoRA-based quantum electrometers into practical wireless communication receivers is bottlenecked by both the macroscopic readout dead-times and nonlinear binomial population counts rather than the additive-Gaussian baseband samples in conventional receivers. To address these critical challenges and facilitate the practicality of the CoRA, in this paper, a subarray scheduling-based spatiotemporal multiplexing (STM) framework is first proposed for gap-free symbol interrogation within an admitted packet. Additionally, a signal detection scheme based on a joint binomial particle smoother (JBPS) is proposed {for} burst-wise state and data inference. Furthermore, the weak-signal quantum-projection-noise (QPN) limit and the state-averaged finite-alphabet rate are derived for CoRA assisted wireless communications. Numerical results demonstrate that: 1) The proposed STM-based CoRA scheme resolves the macroscopic temporal bottleneck for continuous signal sampling; 2) The proposed JBPS approach effectively mitigates the nonlinear noise amplification to achieve a lower bit error rate; and 3) {{Performance analysis reveals a fundamental sensitivity–sustainability tradeoff that increasing the number of atoms effectively suppresses the QPN, but simultaneously reduces the sustainable packet repetition rate.}}

\end{abstract}

\begin{IEEEkeywords}
Atomic receiver, Cold-Atom Rydberg Array (CoRA), spatiotemporal multiplexing, wireless communications.
\end{IEEEkeywords}

\section{Introduction}
\label{sec:introduction}

\IEEEPARstart{H}{igh-sensitivity} reception is a longstanding objective in wireless receivers, as it extends communication range and enables reliable signal recovery under weak incident fields. However, the sensitivity of a conventional antenna-based receiver is fundamentally constrained by thermal noise, i.e., the Johnson--Nyquist noise \cite{Johnson1928Thermal}, generated by its dissipative antenna impedance and receiver electronics. Benefiting from the large electric-dipole moments between neighboring Rydberg states that can produce strong microwave coupling, Rydberg atomic quantum receivers (RAQRs) emerge as a novel receiver technology for detecting the ultra-weak electromagnetic fields that fundamentally replaces classical metallic antennas with quantum sensors composed of Rydberg atom ensembles \cite{10972179, 11124471, anderson2020atomic}. More particularly, most RAQRs demonstrated to date employ room-temperature vapor cells. Despite their experimental accessibility, vapor cell-based RAQRs rely on a thermally moving and ensemble-averaged atomic medium. In this case, the atomic Doppler broadening, collisional decoherence, and spatially nonuniform coupling make their sensitivity, bandwidth, and dynamic range strongly dependent on the excitation geometry and operating point \cite{11692870,glick2025doppler, ref_cui_mimo}.

In parallel, a Cold-Atom Rydberg Array (CoRA) has emerged as a rapidly advancing frontier of quantum science, as the deterministic atom-by-atom assembly of defect-free arrays was realized through single-atom-resolved imaging and optical-tweezer rearrangement \cite{endres2016atom,barredo2016atom}, establishing the individually programmable atomic systems. Currently, the capabilities of cold-atom arrays have enabled major advances in quantum simulation, many-body physics, and logical quantum computing \cite{kaufman2021quantum,bluvstein2024logical,manetsch2025tweezer}. In particular, recent CoRA studies have demonstrated their potential for approaching the standard quantum limit (SQL) \cite{duverger2024metrology,zhang2026microwave, gong2026quantumdomain}. However, existing implementations have primarily focused on estimating the amplitude or phase of controlled, quasi-stationary {microwave (MW)} fields, rather than continuously recovering dynamically modulated, information-bearing waveforms under practical communication constraints. These developments motivate the central question of this work: \emph{\textbf{whether a programmable CoRA can be transformed from a discrete quantum sensor into a reliable wireless receiver when finite-atom statistics, destructively probabilistic readout, and atom-replenishment constraints are jointly taken into account?}}

\subsection{Prior Works}

\begin{figure*}[t]
\centerline{\includegraphics[width=6.8in]{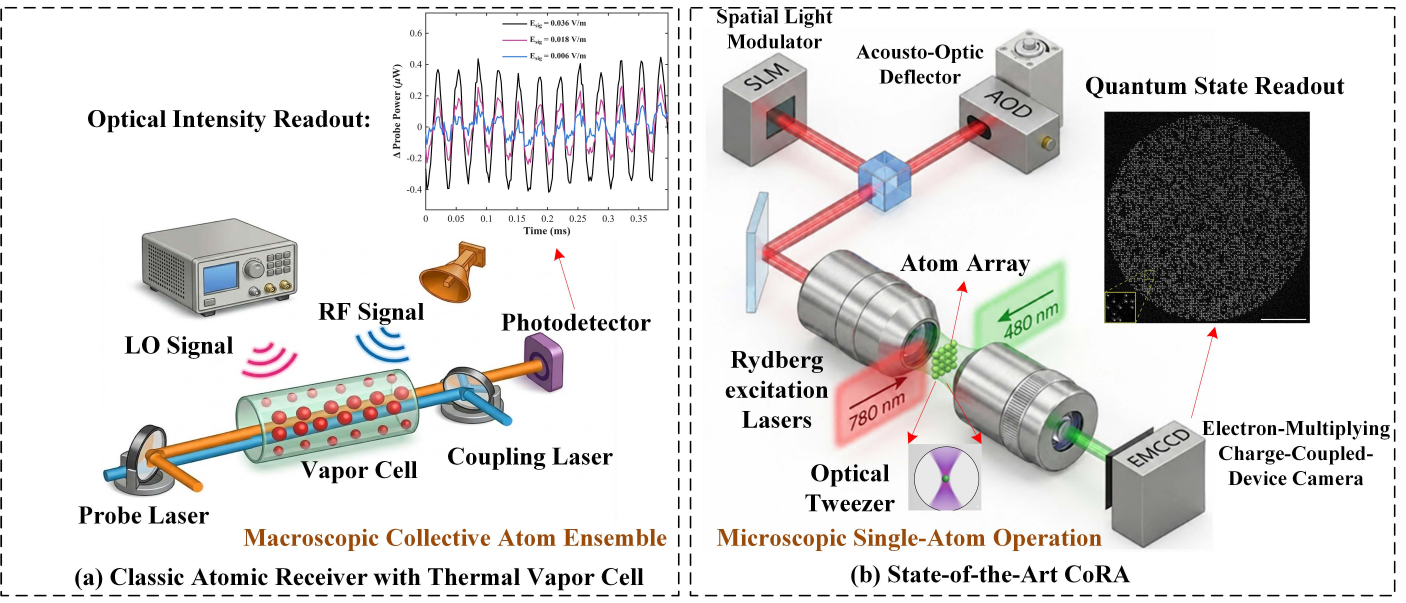}}
\caption{Rydberg atom-based quantum electrometry architectures. (a) The classic atomic receiver utilizing a macroscopic hot vapor cell \cite{meyer2018digital, Simons2019RydbergMixer, holloway2019detecting}, where the microwave field is  measured through the collective optical intensity readout of an atomic ensemble. (b) The state-of-the-art CoRA empowered by optical tweezers \cite{ebadi2021quantum,chiu2025continuous,gyger2024continuous}, which isolates microscopic laser-cooled atoms to enable single-atom quantum state readout via state-selective fluorescence imaging.}
\label{fig:level_schemes}
\end{figure*}

Existing RAQRs can be systematically classified into three dominant frameworks based on their hardware implementation and readout mechanisms. As illustrated in Fig. \ref{fig:level_schemes}(a), the first framework encompasses {macroscopic vapor-cell-based RAQRs}, which leverage the electromagnetically induced transparency (EIT) when interacting with room-temperature atomic ensembles \cite{ ref_gong_overview, 11482843, chen-Rydberg2, wang2026multicarrier}. While capable of high-sensitivity RF sensing, they are fundamentally bottlenecked by severe Doppler broadening \cite{meyer2021optimal, glick2025doppler}, blackbody radiation \cite{cooke1980effects, kaur2025impact}, and a spatial resolution restricted to submillimeter scale as well as limited instantaneous bandwidth \cite{holloway2014sub, hu2023improvement}. The second paradigm utilizes {collective cold-atom ensembles} confined in magneto-optical traps (MOTs) \cite{tu2024approaching, duverger2024metrology}. Although macroscopic cold-atom ensembles significantly improve the absolute sensitivity by suppressing the Doppler broadening and transit-time noise, their temporal resolution remains fundamentally constrained. Since these systems still rely on the continuous establishment of macroscopic EIT dark states, their sensitivity is limited by photon shot noise in optical readout.

To overcome the Doppler and Poissonian limits in conventional vapor-cell-based RAQRs, the state-of-the-art in quantum electrometry has recently shifted to the CoRA architecture \cite{kaufman2021quantum,manetsch2025tweezer,holman2026trapping}. As shown in Fig. \ref{fig:level_schemes}(b), by utilizing tightly focused optical tweezers in an ultra-high vacuum environment, this architecture traps and isolates individual and laser-cooled neutral atoms, which eliminates collision-induced decoherence \cite{browaeys2020many}. Instead of utilizing steady-state collective optical responses, the authors of \cite{zhang2026microwave} employ single-atom homodyne measurements and discrete state-selective fluorescence imaging to approach the SQL for microwave electrometry. This paradigm shift transitions the observation from Poissonian statistics to binomial projection statistics, enabling a field sensitivity bounded strictly by the SQL \cite{adams2019rydberg}. In particular, the single-atom excitation is governed purely by ultra-fast Rabi dynamics rather than slow collective relaxation, inherently facilitating an ultra-fast temporal response. Moreover, unlike classical metallic antennas that suffer from severe mutual impedance coupling when spaced below the half-wavelength limit, laser-cooled neutral atoms are intrinsically immune to macroscopic electromagnetic crosstalk. This lack of mutual coupling permits array lattice spacings at the micrometer scale, directly translating to an extreme subwavelength spatial resolution \cite{zhang2026microwave}.

\subsection{Motivations and Contributions}

Despite these metrological breakthroughs, existing cold-atom arrays are engineered exclusively for discrete-time quantum sensing \cite{zhang2026microwave}, presenting severe physical bottlenecks for continuous-time communications and sensing. Firstly, the extraction of the coherent quantum state relies on destructive fluorescence imaging, causing quantum-state collapses of the CoRA. Although single-atom excitation theoretically exhibits an ultrafast temporal response, translating this advantage into continuous high-speed data streams faces a catastrophic hardware bottleneck in the time domain. The requisite optical pumping, coherent interrogation, imaging, and re-cooling cycle creates a macroscopic dead-time on the order of tens of milliseconds, e.g., limiting the measurement rate to $53$ Hz \cite{zhang2026microwave}. This fundamentally violates the temporal continuity required for megabaud communication symbol streams. Secondly, unlike classical analog receivers, the discrete binary quantum state readout of cold-atom arrays fundamentally degrades continuous-time communication waveforms into highly nonlinear binomial distributions dominated by the severe quantum shot noise. This profound quantization bottleneck invalidates traditional Gaussian-based signal processing, triggering catastrophic noise amplification under limited atomic resources.

Against the above background, we investigate CoRA assisted wireless communication systems and propose a hardware-protocol co-design signal processing scheme. The main contributions are summarized as follows:
\begin{itemize}
\item  We propose a spatiotemporal multiplexing (STM) framework based on the continuous-operation CoRA architecture to overcome the fundamental temporal resolution constraints of discrete-time quantum sensing. By exploiting a steady-state dual-lattice atom supply chain and multiplexing stationary subarrays via optical steering, the STM architecture transforms the CoRA from a duty-cycle-limited sensor into a continuous-time burst-packet receiver tailored for modern wireless networks.

\item We derive an end-to-end probabilistic model that maps transmit power through free-space spreading and coherent atomic interrogation to exact in-phase and quadrature (I/Q) binomial counts, in which a three-dimensional latent state connects the physical phase and Rabi-gain variations to the observation likelihood. Then, we propose a joint binomial particle smoother (JBPS) for burst-wise state and data inference, which exploits all observations in the finite burst before bitwise posterior decisions are formed.

\item We establish a unified performance framework that connects the active atom number to receiver sensitivity, finite-alphabet rate, and long-term packet sustainability. \textbf{\textit{{{In particular, we prove the minimum number of active atoms per subarray required for the CoRA to outperform the specified classical-antenna and hot vapor-cell benchmarks, and reveal a sensitivity–sustainability scaling law governed by the available atom flux of the CoRA.}}}}

\item We demonstrate through numerical results that: 1) The proposed STM overcomes the macroscopic measurement-cycle bottleneck within each admitted burst, thereby enabling gap-free symbol-rate sampling; 2) The proposed JBPS detection avoids the nonlinear noise amplification, thereby achieving superior bit-error-rate performance over the conventional signal detection baselines; and 3) In the quantum projection noise (QPN)-limited and atom-flux-limited regime, the weak-signal SNR linearly scales as the number of atoms per subarray $K_m$, whereas the sustainable packet repetition frequency scales as $K_m^{-1}$, so that their product is independent of $K_m$.
\end{itemize}

\subsection{Organization and Notation}

{In Section II, we introduce the experimentally demonstrated pulsed and continuous-loading Rydberg-atom-array hardware backbones. In Section~\ref{sec:short_burst_receiver}, we propose the STM architectures and the JBPS detection algorithms.} In Section~\ref{sec:performance_analysis_r6}, we derive the information-theoretic performance measures for different antenna architectures. Lastly, we present the numerical evaluation in Section V and conclude the article in Section~\ref{sec:conclusion}.

\emph{Notations:} The superscripts $(\cdot)^T$ and $(\cdot)^H$ stand for the transpose and conjugate transpose operations. $\mathbb{E}[\cdot]$ represents the statistical expectation. $|\cdot|$ and $\|\cdot\|_2$ denote the absolute value of a scalar and the $\ell_2$-norm of a vector, respectively. The operators $\Re\{\cdot\}$ and $\Im\{\cdot\}$ extract the real and imaginary parts of a complex number, respectively, and $\angle(\cdot)$ denotes the phase angle. $|\mathcal{A}|$ represents the cardinality of the set $\mathcal{A}$. The notation $\mathbf{I}_K$ stands for the $K \times K$ identity matrix. $\mathcal{N}(\bm{\mu}, \mathbf{\Sigma})$ denotes a Gaussian distribution with mean vector $\bm{\mu}$ and covariance matrix $\mathbf{\Sigma}$. $\mathbb{I}_{[\cdot]}$ denotes the indicator function, which equals $1$ if the condition holds and $0$ otherwise. $\hbar$ denotes the reduced Planck constant. $\exp(\cdot)$ denotes the exponential function.

\section{Physical Architectures and Implementations of Cold-Atom Arrays}
\label{sec:physical_architecture}

In this section, we first review the experimentally demonstrated pulsed Rydberg-atom-array electrometer and formulate its single-atom measurement model \cite{zhang2026microwave}. We then demonstrate a continuous atom-loading architecture that serves as the hardware backbone of CoRA systems.

\subsection{Pulsed CoRA for Microwave Electrometry}

\begin{figure}[t]
\centering \includegraphics[width=3.0in]{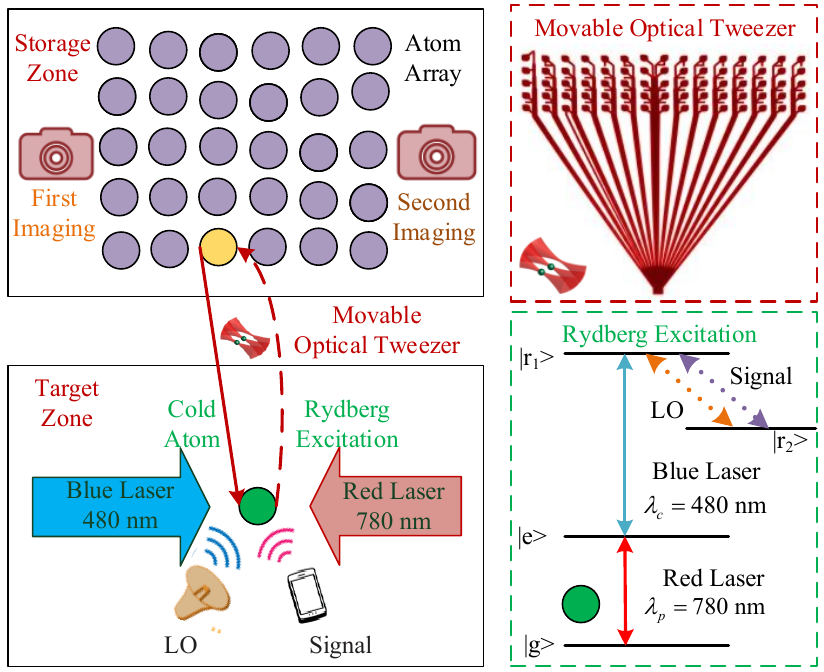}
\caption{{Existing} pulsed Rydberg-array electrometer based on the single movable optical tweezer \cite{zhang2026microwave}.
}
\label{fig:classic_array}
\end{figure}

As shown in Fig.~\ref{fig:classic_array}, the pulsed architecture in \cite{zhang2026microwave} provides site-resolved preparation, control, and readout of individual $^{87}\mathrm{Rb}$ atoms. Compared with a thermal vapor cell, single-atom isolation suppresses collision-induced decoherence, although atomic motion, laser noise, trap inhomogeneity, spontaneous decay, and blackbody-radiation coupling remain relevant.

\subsubsection{Dual-Zone Optical-Tweezer Architecture}

The pulsed platform contains two functional zones.
\begin{itemize}
\item \emph{Reservoir zone}: Laser-cooled atoms are probabilistically loaded into a static two-dimensional tweezer array, and fluorescence imaging identifies the occupied sites.
\item \emph{Target zone}: A movable tweezer transports selected atoms to a spatially separated region for state preparation, Rydberg excitation, MW interrogation, and state-selective de-excitation.
\end{itemize}
In the target zone, an atom is optically pumped to $|g\rangle=|5S_{1/2},F=2,m_F=2\rangle$ and transferred by stimulated Raman adiabatic passage (STIRAP) to $|r_1\rangle=|68D_{5/2},m_J=5/2\rangle$. Then, the incident MW field couples $|r_1\rangle$ to $|r_2\rangle=|69P_{3/2},m_J=3/2\rangle$.

\subsubsection{Single-Atom Homodyne Measurement}

Let $\mathbf E_k(t)$ denote the local MW electric field and $\mathbf d_{12}$ the selected transition-dipole vector. Under resonant and polarization-matched operation, the field-to-Rabi mapping is given by
\begin{equation}
\Omega_{S,k}(t) =\frac{|\mathbf d_{12}\cdot\mathbf E_k(t)|}{\hbar} =\kappa_E|E_k(t)|,
\label{eq:rabi_mapping_sec2}
\end{equation}
where $\kappa_E\triangleq\mu_{\rm dp}/\hbar$ and $\mu_{\rm dp}\triangleq|\mathbf d_{12}|$.

For a sufficiently strong signal, $\Omega_{S,k}$ can be estimated by fitting the observed Rabi oscillation. In the weak-field regime, a calibrated local MW operation prepares a phase reference, after which the atom evolves under the incident field for an interrogation time $\tau_{\rm int}$ \cite{zhang2026microwave}. For an ideal atom, the preparation, signal evolution, and final $\pi/2$ analysis rotation give the ideal single-shot probability of projecting atom $k$ onto state $|r_1\rangle$, which can be expressed as
\begin{equation}
p_{\uparrow,k}^{\rm id} =\frac{1}{2} - \frac{V_k}{2} \cos(\phi_k-\phi_{\rm a}) \sin(\Omega_{S,k}\tau_{\rm int}),
\label{eq:ideal_analysis_probability_sec2}
\end{equation}
where $\phi_k$ is the received-field phase, and $\phi_{\rm a}$ is the phase of the final analysis rotation. $V_k=\exp\!\left(-\frac{\Gamma_{\rm pop}\tau_{\rm int}}{2}\right)\leq1$ collects coherent-visibility loss and $\Gamma_{\rm pop}$ is the population-decay rate.

Let $\epsilon_{0,k}$ and $\epsilon_{1,k}$ denote the $0\!\rightarrow\!1$ and $1\!\rightarrow\!0$ readout-error probabilities at site $k$, respectively. The observed probability is given by
\begin{align}
p_{\uparrow,k} &=\epsilon_{0,k}(1-p_{\uparrow,k}^{\rm id}) + (1-\epsilon_{1,k})p_{\uparrow,k}^{\rm id}\nonumber\\
&=\frac{B_{{\rm sp},k}}{2} - \frac{A_{{\rm eff},k}}{2} \cos(\phi_k-\phi_{\rm a}) \sin(\Omega_{S,k}\tau_{\rm int}),
\label{eq:spam_affine_probability_sec2}
\end{align}
where $B_{{\rm sp},k}=1+\epsilon_{0,k}-\epsilon_{1,k}$, and $A_{{\rm eff},k}=V_k(1-\epsilon_{0,k}-\epsilon_{1,k})$.

After calibration and grouping of atoms with comparable parameters, the site index is suppressed. With $\phi\triangleq\phi_k-\phi_{\rm a}$, the probability becomes \cite{zhang2026microwave}
\begin{equation}
p_{\uparrow} =\frac{B_{\rm sp}}{2} -\frac{A_{\rm eff}}{2}\cos(\phi)\sin(\Omega_S\tau_{\rm int}),
\label{eq:single_atom_probability_sec2}
\end{equation}
where $A_{\rm eff}$ and $B_{\rm sp}$ denote the calibrated state-preparation-and-measurement (SPAM) contrast and offset, respectively.

State mapping and fluorescence discrimination produce a binary outcome $z_k\in\{0,1\}$. For $N$ conditionally independent atoms operated at the same calibrated point, we have\footnote{The binomial model assumes conditionally independent sites with a common calibrated success probability, and unequal site probabilities lead to Poisson--binomial counts. These nonidealities modify the count statistics but not the basic STM scheduling principle.}
\begin{equation}
N_\uparrow=\sum_{k=1}^{N}z_k \sim\operatorname{Bin}(N,p_\uparrow),
\label{eq:single_branch_binomial_sec2}
\end{equation}
where $\operatorname{Bin}(N,p_\uparrow)$ denotes the binomial distribution for $N$ conditionally independent atoms, each of which is detected in the upper state with probability $p_\uparrow$.

In the weak-field region and at maximum slope, the resulting QPN-limited Rabi-frequency uncertainty is given by \cite{zhang2026microwave}
\begin{equation}
\sigma_{\Omega}^{\rm QPN} =\frac{\sqrt{B_{\rm sp}(2-B_{\rm sp})}} {A_{\rm eff}\tau_{\rm int}\sqrt{N}}.
\label{eq:sql_rabi_sec2}
\end{equation}

\subsubsection{Dead Time of Pulsed Measurement Cycle}

The coherent interaction is much faster than the complete destructive measurement cycle \cite{zhang2026microwave}. The complete cycle is $T_{\rm cyc}^{\rm pulse}=\tau_{\rm int}+\tau_{\rm dead}$, and its interrogation duty factor is $\eta_{\rm pulse}=\frac{\tau_{\rm int}}{T_{\rm cyc}^{\rm pulse}}$, where $\tau_{\rm dead}$ denotes the macroscopic dead time that collects all intervals without MW interrogation. Specifically, $\tau_{\rm dead}$ is composed of the times required for atom loading, transport to the interrogation zone, internal-state and Rydberg-state preparation, and state-selective readout, respectively. According to the reported 53-Hz cycle in \cite{zhang2026microwave}, i.e., $T_{\rm cyc}^{\rm pulse}\simeq18.9~\mathrm{ms}$, and $\tau_{\rm int}=20~\mu\mathrm{s}$, we have $\eta_{\rm pulse}\simeq1.06\times10^{-3}$. Although a separate strong-pulse mode exhibits nanosecond-scale microscopic response, a 53-Hz destructive cycle cannot be directly applied to real-time wireless communications.

\begin{figure*}[t]
\centering \includegraphics[width=6.8in]{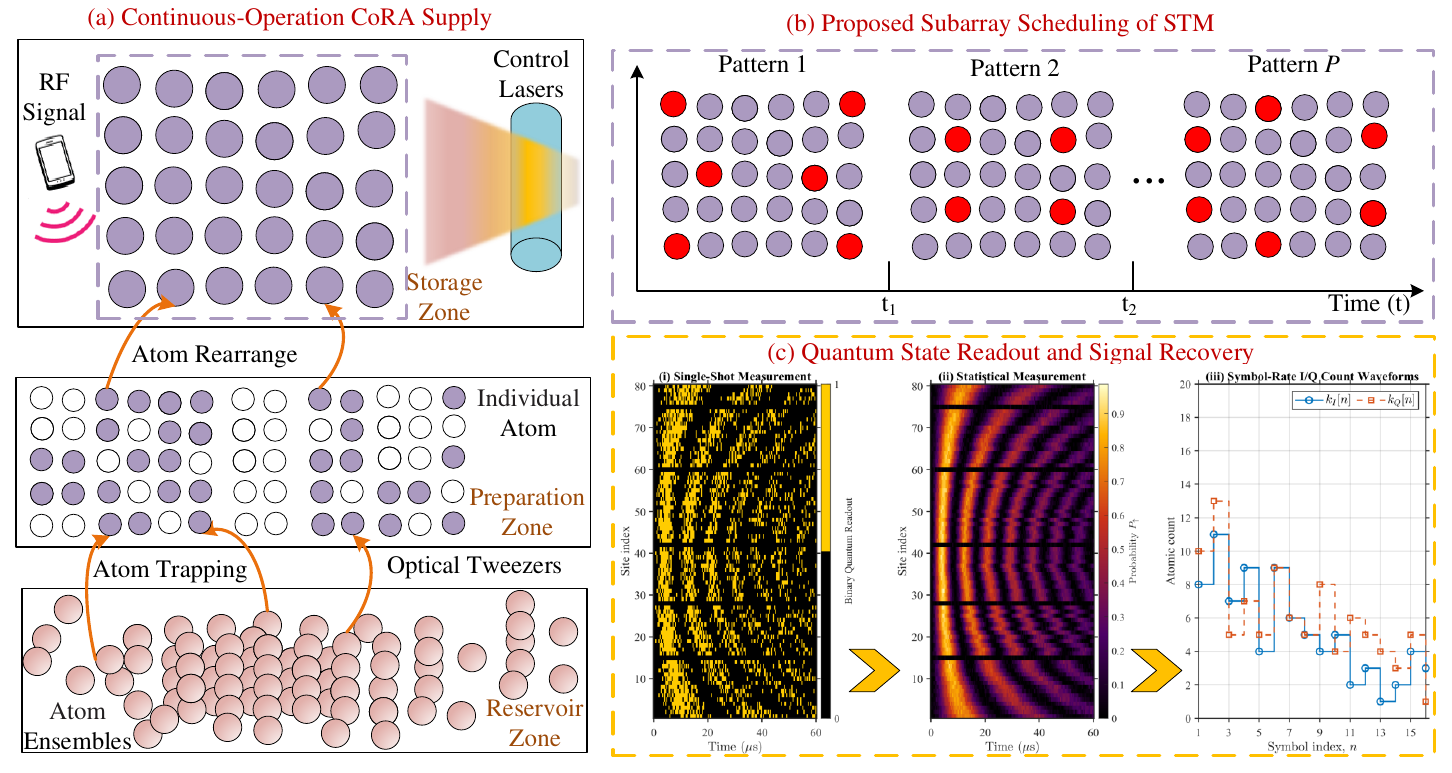}
\caption{Continuous-operation CoRA architecture: (a) Reservoir, preparation, and storage zones for continuous atom supply \cite{chiu2025continuous}, (b) {Proposed temporal subarray scheduling for STM-based wireless reception}, and (c) Quantum state readout and signal recovery.}
\label{fig:continuous_array}
\end{figure*}

\subsection{Continuous-Operation CoRA Architecture}

To remove macroscopic dead time from the foreground sensing cycle, the continuous architecture spatially separates atom generation from coherence-sensitive operations \cite{gyger2024continuous, chiu2025continuous}. As illustrated in Fig.~\ref{fig:continuous_array}(a), atoms progress through three functional zones.

\subsubsection{Reservoir Zone}
The reservoir zone uses a continuous magneto-optical trap (MOT) to capture thermal atoms. A red-detuned one-dimensional optical lattice transports atoms downstream without interrupting the MOT.

\subsubsection{Preparation Zone}
In this zone, multi-channel acousto-optic-deflector (AOD) tweezers extract atoms from the moving lattice, rearrange them into defect-free geometries, and optically pump them into the required initial state.

\subsubsection{Storage Zone}
Initialized arrays are transferred to a static spatial-light-modulator (SLM) tweezer array for MW interrogation and Rydberg excitation.

\subsubsection{Continuous Spectral Shielding Mechanism}
A 1529-nm shielding beam applies spatially selective alternating-current (AC) Stark shifts to the coherence-critical regions, thereby suppressing scattering caused by simultaneous cooling and imaging operations \cite{chiu2025continuous}. Let $\tau_{\rm reset}$ denote the sequential reset time of the pulsed architecture and $\tau_{\rm sw}$ the inter-subarray optical switching time. The pulsed-cycle rate can be expressed as $f_{\rm cyc}^{\rm pulse} =\frac{1}{\tau_{\rm int}+\tau_{\rm reset}} \simeq\frac{1}{\tau_{\rm reset}}$, and the foreground interrogation-rate bound is $f_{\rm fg} \leq\frac{1}{\tau_{\rm int}+\tau_{\rm sw}}$. Hence, ideal cycle-rate gain is given by
\begin{align}
\xi_{\rm cyc} =\frac{f_{\rm fg}}{f_{\rm cyc}^{\rm pulse}} \leq\frac{\tau_{\rm int}+\tau_{\rm reset}} {\tau_{\rm int}+\tau_{\rm sw}} \simeq\frac{\tau_{\rm reset}} {\tau_{\rm int}+\tau_{\rm sw}}.
\label{eq:cycle_gain_sec2}
\end{align}

{\begin{remark}[Continuous-operation hardware vs. communication reception] Although the demonstrated continuous-operation architecture removes the macroscopic atom-loading and preparation stages from the foreground sensing cycle, it does not by itself constitute a wireless communication receiver. Its operation remains based on destructive, site-resolved atomic readout, and the demonstrated hardware does not specify how successive communication symbols should be assigned to fresh atomic ensembles. Therefore, the remaining challenge is not continuous atom supply alone, but a communication-layer co-design that converts the continuously supplied CoRA into a burst wireless receiver.
\end{remark}
}

\section{{Proposed STM-Based} CoRA for Wireless Communications}
\label{sec:short_burst_receiver}

{Building on the demonstrated continuous-operation backbone in Section~\ref{sec:physical_architecture}, this section proposes the STM-based CoRA for burst wireless reception.} Then, we propose the joint binomial particle smoother (JBPS) to perform burst-wise joint state-data inference.

\subsection{Proposed Spatiotemporal Multiplexing Framework}

As illustrated in Fig.~\ref{fig:continuous_array}(b), the proposed STM overlay interleaves atoms in space and schedules the resulting subarrays in time to provide gap-free sampling within an admitted communication burst. To retain both quadratures of the received MW field, each symbol is simultaneously interrogated by two atomic subarrays that form an I/Q pair. The two subarrays experience the same incident MW signal and therefore the same signal-induced Rabi angle, but they are measured using local analysis rotations whose phases differ by $\pi/2$.

\subsubsection{Interleaved Spatial Partition}

For an $R\times C$ tweezer lattice with $K=RC$ sites, the lattice index set is given by
\begin{equation}
\mathcal U =\{(r,c):r=0,\ldots,R-1,\;c=0,\ldots,C-1\}.
\label{eq:lattice_index_sec2}
\end{equation}
The sites are partitioned into $M$ scheduled I/Q pairs or $2M$ disjoint subarrays, which are given by
\begin{align}
q(r,c) &=(r\Delta_r+c\Delta_c)\bmod(2M), \label{eq:subarray_partition_sec2_q}\\
\mathcal A_{I,m} &=\{(r,c)\in\mathcal U:q(r,c)=2m\},\nonumber\\
\mathcal A_{Q,m} &=\{(r,c)\in\mathcal U:q(r,c)=2m+1\},
\label{eq:subarray_partition_sec2}
\end{align}
for $m=0,\ldots,M-1$, where $\Delta_r$ and $\Delta_c$ distribute each group over the lattice footprint. In the balanced case, $|\mathcal A_{I,m}|=|\mathcal A_{Q,m}|=K_m$, so one I/Q snapshot consumes $2K_m$ atoms.

\subsubsection{Temporal Scheduling and Foreground Duty Factor}

\begin{figure}[t]
\centering \includegraphics[width=3.3in]{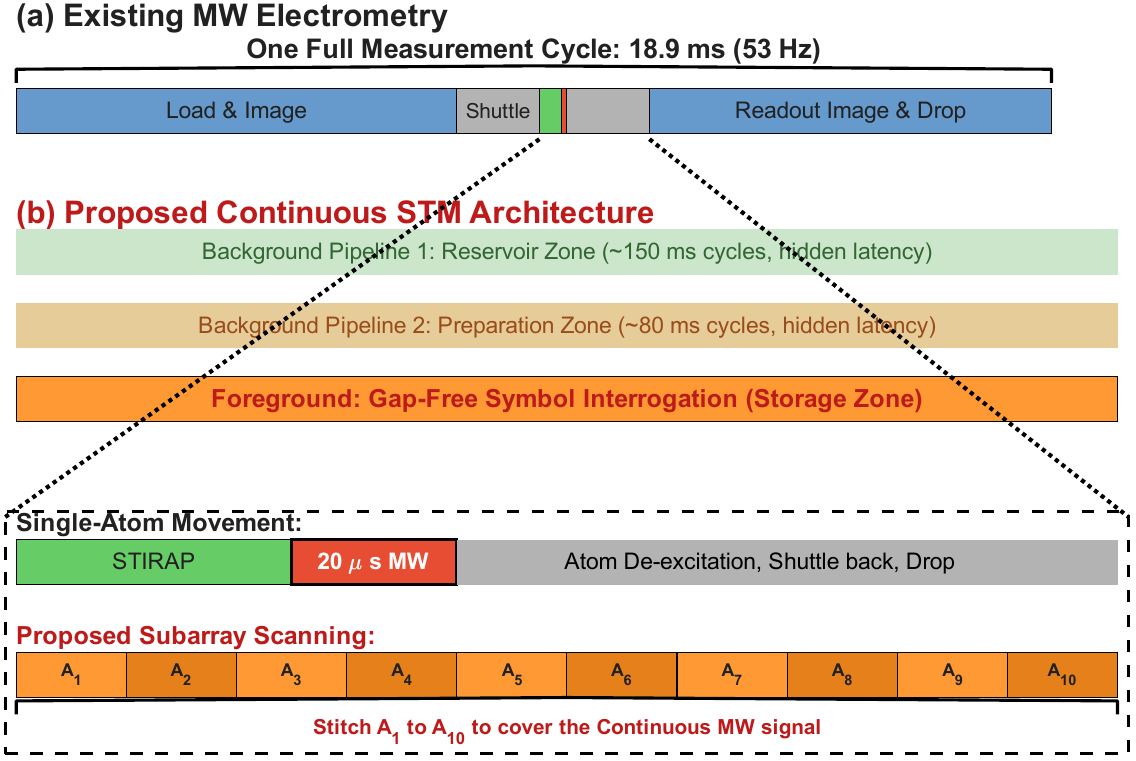}
\caption{Timing comparison {between (a) the existing pulsed MW electrometry \cite{zhang2026microwave} and (b) the proposed STM-based CoRA architecture}, which preloads disjoint I/Q subarray pairs and selects one fresh pair per symbol.}
\label{fig:stm_timing}
\end{figure}

Let $T_{\rm sym}$ denote the symbol interval and $m[n]=n\bmod M$ the I/Q pair assigned to symbol $n$. Only pair $m[n]$ is activated during that interval. The symbol-level timing constraint is given by
\begin{equation}
T_{\rm sym} \geq T_{\rm micro} \triangleq\tau_{\rm int}+\tau_{\rm sw}.
\label{eq:micro_cycle_sec2}
\end{equation}
The aggregate snapshot rate is $f_{\rm samp}=1/T_{\rm sym}$, whereas each pair is activated at $f_{\rm pair}=1/(MT_{\rm sym})$, such that $f_{\rm samp}=Mf_{\rm pair}$. The coherent-interrogation fill factor is given by
\begin{equation}
\eta_{\rm int} =\frac{\tau_{\rm int}}{T_{\rm sym}} \leq\frac{\tau_{\rm int}} {\tau_{\rm int}+\tau_{\rm sw}}.
\label{eq:stm_fill_factor_sec2}
\end{equation}

\subsubsection{Recovery Latency and Atom-Flux Constraint}

After an I/Q pair is interrogated, it has $(M-1)T_{\rm sym}$ for readout, ejection or reinitialization, transport, and refill. Let $\tau_{\rm rec}$ denote this end-to-end recovery time and $\Phi_{\rm q}$ the usable initialized-atom flux. The pair-reuse latency can be expressed as
\begin{align}
(M-1)T_{\rm sym} &\geq\tau_{\rm rec}.
\label{eq:latency_condition_sec2}
\end{align}
For indefinite cyclic STM operation without burst buffering, the initialized-atom supply would additionally need to satisfy
\begin{equation}
\frac{2K_m}{T_{\rm sym}}\le \Phi_q.
\label{eq:flux_condition_sec2}
\end{equation}
This stringent condition motivates the buffered burst protocol introduced in Section~III-B.

\begin{remark}[Temporal bottleneck versus spatial multiplexing]
Note that the proposed STM architecture does not shorten the intrinsic atomic recovery time. Instead, it converts the local temporal bottleneck into a spatial-resource requirement by rotating fresh subarrays across successive symbol intervals. Consequently, sufficient spatial resources can provide gap-free interrogation within an admitted burst, whereas indefinitely sustainable operation remains constrained by the atomic recovery and replenishment rates.
\end{remark}

\subsubsection{Atomic Spacing Constraint}

The relevant blockade quantity is the minimum separation among atoms simultaneously excited in an I/Q pair, which is given by
\begin{equation}
d_{\min}^{(m)}
=\min_{\substack{i\neq j\\
i,j\in\mathcal A_{I,m}\cup\mathcal A_{Q,m}}}
\|\mathbf r_i-\mathbf r_j\|_2.
\label{eq:min_spacing_sec2}
\end{equation}

In general, the blockade-radius $R_b$ depends on the selected Rydberg states and laser parameters \cite{browaeys2020many}. For typical principal quantum numbers, e.g., $n = 50 \sim 70$ utilized in state-of-the-art Rydberg atom arrays, the blockade radius generally spans $R_b \approx 5 \sim 10\ \mu\text{m}$ \cite{bluvstein2024logical}. The partition should therefore satisfy $d_{\min}^{(m)}\gtrsim R_b$ with an appropriate safety margin.

\subsubsection{Observation Mapping}

Let $\mathbf z[n]\in\{0,1\}^{K}$ denote the binary site outcomes over the lattice. For branch $b\in\{I,Q\}$ and pair $m$, let $\mathbf S_{b,m}\in\{0,1\}^{K_m\times K}$ select the sites in $\mathcal A_{b,m}$. The baseband observations are given by
\begin{equation}
\mathbf z_{b,m[n]}[n] =\mathbf S_{b,m[n]}\mathbf z[n], \qquad b\in\{I,Q\},
\label{eq:selection_mapping_sec2}
\end{equation}
together with the known activity index $m[n]$. The corresponding branch-wise atomic counts are $k_b[n] = \mathbf 1^{T}\mathbf z_{b,m[n]}[n],$ and the symbol-rate observation is $\mathbf k_n=[k_I[n],k_Q[n]]^{T}$.

The incident field illuminates both groups, while their analysis rotations differ by $\pi/2$. Hence, the I and Q counts are obtained simultaneously from disjoint ensembles, whose symbol-rate observations $k_I[n]$ and $k_Q[n]$ are illustrated in Fig.~\ref{fig:continuous_array}(c). The corresponding temporal scheduling of fresh I/Q subarray pairs over successive symbol intervals is further shown in Fig.~\ref{fig:stm_timing}, illustrating the timing advantage of the proposed STM architecture. By moving the slow atom-loading and preparation stages to background pipelines and sequentially activating preloaded subarrays, the proposed STM converts the macroscopic dead-time bottleneck into a spatial-resource requirement and enables gap-free symbol interrogation within an admitted burst.

\subsection{Communication Protocol and Atomic Observation}
\label{subsec:communication_atomic_observation}

Since each I/Q snapshot destructively measures two groups of $K_m$ atoms, indefinite symbol-rate reception would require an initialized-atom flux $2K_m/T_{\rm sym}$, which can greatly exceed the background flux of continuous-operation cold-atom arrays. The proposed quantum-constrained burst protocol therefore treats the storage array as a preloaded atomic buffer rather than claiming indefinite fresh-atom sampling. The burst-admission constraint for $N_{\rm b}\le M$ I/Q snapshots is $2K_mN_{\rm b}\leq K_0$, where $K_0$ is the number of initialized atoms available at the burst start. Across repeated bursts with usable initialized-atom flux $\Phi_{\rm q}$, the average atom-supply constraint is $2K_mN_{\rm b}f_{\rm pkt}\leq\Phi_{\rm q}$, where $f_{\rm pkt}$ is the long-term packet repetition frequency (PRF).

Let $\mathcal P$ and $\mathcal D$ denote the disjoint pilot and payload index sets of an $N_{\rm b}$-symbol burst, with $\mathcal P\cup\mathcal D=\{0,\ldots,N_{\rm b}-1\}$. A payload symbol $x[n]$ is drawn from a finite alphabet $\mathcal C$ with prior $P_X(c)$ and $\mathbb E[|x[n]|^2]=1$. Let $M_b=|\mathcal C|=2^{m_b}$ and $\mathbf b(c)=[b_1(c),\ldots,b_{m_b}(c)]$ denote its binary label. The transmitted complex baseband burst is given by
\begin{equation}
u(t)=\sum_{n=0}^{N_{\rm b}-1}x[n]g_{\rm c}(t-nT_{\rm sym}),
\label{eq:cosine_pulse_shaped_burst}
\end{equation}
where $g_{\rm c}$ denotes the one-symbol cosine pulse.

Given transmit power $P_t$, transmitter gain $G_t$, and link distance $d$, the free-space power density and peak electric-field amplitude are given by
\begin{align}
E_0(P_t,d)&=\sqrt{2Z_0\frac{P_tG_t}{4\pi d^2}},
\label{eq:free_space_power_to_field_r5}
\end{align}
where $Z_0$ is the free-space impedance.

The small-scale fading and packet-to-packet shadowing can be modeled as
\begin{equation}
h_{\rm ch}=10^{\xi/20}\!\left[ \sqrt{\frac{K_R}{K_R+1}}e^{j\phi_{\rm L}} +\sqrt{\frac{1}{K_R+1}}g \right],
\label{eq:block_rician_channel_r5}
\end{equation}
where $g\sim\mathcal{CN}(0,1)$, $K_R$ is the linear Rician factor, $\phi_{\rm L}$ is the line-of-sight phase, and $\xi\sim\mathcal N(0,\sigma_{\rm sh}^2)$ is the shadowing in dB. The coefficient is constant within a burst and independently regenerated between bursts.

After timing synchronization removes the common delay $d/c_0$, the received peak-field envelope is $\mathcal E_{\rm rx}(t)=E_0(P_t,d)h_{\rm ch}u(t-d/c_0)$. The output of the rectangular atomic gate with duration $\tau_{\rm int}$ can be expressed as \cite{zhang2026microwave}
\begin{align}
\overline{\mathcal E}_n =\frac{1}{\tau_{\rm int}} \int_{nT_{\rm sym}}^{nT_{\rm sym}+\tau_{\rm int}} \mathcal E_{\rm rx}(t)\,dt =\alpha_gE_0(P_t,d)h_{\rm ch}x[n],
\label{eq:atomic_integrate_dump}
\end{align}
where $\alpha_g\triangleq\frac{1}{\tau_{\rm int}}\int_0^{\tau_{\rm int}}g_{\rm c}(t)\,dt$.

Let $\zeta_{\rm pol}\in[0,1]$ be the projection of the incident polarization onto the selected transition dipole \cite{sedlacek2013vector}. The signal Rabi phasor before residual reference drift is given by
\begin{equation}
\Omega_n =\zeta_{\rm pol}\kappa_E\overline{\mathcal E}_n =\alpha_g\zeta_{\rm pol}\kappa_EE_0(P_t,d)h_{\rm ch}x[n],
\label{eq:wireless_rabi_phasor_r6}
\end{equation}
where $\kappa_E=\mu_{\rm dp}/\hbar$ and $\mu_{\rm dp}$ is the transition dipole moment.

\subsection{Atom State Tracking and Binomial Likelihood}
\label{subsec:dynamic_binomial_likelihood}

\subsubsection{Dynamic Phase and Rabi-Gain State}

Coherent symbol recovery depends on the phase of the incident Rabi phasor relative to the phase-coherent local MW analysis operations \cite{Simons2019RydbergMixer}. Although the block-fading coefficient $h_{\rm ch}$ remains constant within one burst, its amplitude and phase are unknown at the receiver. Moreover, a residual frequency offset between the received carrier and the analysis reference accumulates phase from symbol to symbol, while their relative oscillator phase noise introduces random phase innovations. Accordingly, the receiver must infer both the relative phase and the effective field-to-Rabi gain from the atomic observations.

Let $f_{\rm rx}(t)$ and $f_{\rm ref}(t)$ denote the instantaneous frequencies of the received carrier and the local MW analysis reference, respectively, and define $\Delta f(t)\triangleq f_{\rm rx}(t)-f_{\rm ref}(t)$. The continuous relative phase can then be expressed as
\begin{equation}
\theta(t) = \theta_0 +2\pi\int_0^t\Delta f(\tau)\,d\tau +\phi_{\rm pn}(t),
\label{eq:continuous_relative_phase}
\end{equation}
where $\theta_0$ is the unknown packet-start composite phase, and $\phi_{\rm pn}(t)$ represents the residual relative phase noise of the received carrier and the linewidth of lasers.

Let $t_n=nT_{\rm sym}$. The phase accumulated over the interval $[t_n,t_{n+1})$ is defined as
\begin{equation}
\nu_n \triangleq 2\pi\int_{t_n}^{t_{n+1}}\Delta f(\tau)\,d\tau \simeq 2\pi\Delta f_nT_{\rm sym},
\label{eq:phase_increment_definition}
\end{equation}
where $\nu_n$ is measured in radians per symbol and the approximation holds when $\Delta f(t)$ varies slowly within one symbol interval. The corresponding phase-noise innovation is given by
\begin{equation}
w_{\theta,n} \triangleq \phi_{\rm pn}(t_n)-\phi_{\rm pn}(t_{n-1}).
\label{eq:phase_noise_innovation}
\end{equation}

To describe the unknown amplitude scale, let $ g_n=g_{\rm ref}e^{\ell_n}$ denote the effective signal Rabi-frequency amplitude produced by a unit-magnitude symbol during the $n$th interrogation, where $g_{\rm ref}>0$ is a nominal calibrated Rabi-frequency gain. Its normalized logarithmic representation is $ \ell_n \triangleq \ln\!\left(\frac{g_n}{g_{\rm ref}}\right)$, and $\Omega_n=g_n e^{j\theta_n}c$ for candidate symbol $c$. Spatial MW-field inhomogeneity and residual subarray-to-subarray calibration errors can produce small variations in the effective gain $ \ell_n$\cite{manetsch2025tweezer}. Sampling the relative phase once per symbol gives the componentwise state dynamics as
\begin{align}
\theta_n &= \theta_{n-1}+\nu_{n-1}+w_{\theta,n}, \nonumber\\
\nu_n &= \rho_\nu\nu_{n-1}+w_{\nu,n}, \nonumber\\
\ell_n &= \ell_{n-1}+w_{\ell,n},
\label{eq:component_state_recursions}
\end{align}
where $w_{\theta,n}$, $w_{\nu,n}$, and $w_{\ell,n}$ denote the phase, phase-increment, and logarithmic-gain innovations, respectively. The first recursion follows directly from the phase accumulated over one symbol interval. The second recursion models the slowly varying residual frequency offset as a first-order Gauss-Markov process \cite{aghamohammadi1989adaptive}, where $0\leq\rho_\nu\leq1$ is its symbol-to-symbol correlation coefficient. In particular, $\rho_\nu=1$ gives a random-walk frequency offset. The third recursion assumes that the effective Rabi gain is approximately constant within a short burst, while allowing small residual amplitude and calibration variations.

Hence, the resulting phase-frequency-gain dynamics can be represented in a linear Gaussian state-space \cite{aghamohammadi1989adaptive}. Specifically, the discrete state evolution can be expressed as
\begin{align}
&\mathbf s_n=\mathbf F\mathbf s_{n-1}+\mathbf w_n,
\nonumber \\
&\mathbf F=
\begin{bmatrix}
1&1&0\\
0&\rho_\nu&0\\
0&0&1
\end{bmatrix},
\label{eq:jbps_state_transition_r6}
\end{align}
where $\mathbf s_n= [\theta_n,\,\nu_n,\,\ell_n]^{\mathsf T}$, $\mathbf w_n= [w_{\theta,n},\,w_{\nu,n},\,w_{\ell,n}]^{\mathsf T} \sim\mathcal N(\mathbf 0,\mathbf Q)$, and $\mathbf Q= \operatorname{diag}(q_\theta,q_\nu,q_\ell)$. Here, $q_\theta$, $q_\nu$, and $q_\ell$ are the corresponding process-noise variances.

The initial state is given by
\begin{align}
\theta_0&=\arg(h_{\rm ch}),\nonumber\\
\nu_0&\simeq2\pi\Delta f_0T_{\rm sym},\nonumber\\
\ell_0&=\ln\!\left[ \frac{\alpha_g\zeta_{\rm pol}\kappa_EE_0(P_t,d)|h_{\rm ch}|} {g_{\rm ref}} \right]+\delta_{\ell,0},
\label{eq:composite_channel_state_initialization_r6}
\end{align}
where $\delta_{\ell,0}$ captures residual gain-calibration uncertainty.

For candidate symbol $c\in\mathcal C$, the corresponding phase and Rabi angle are defined as $\psi_n(c)=\theta_n+\arg(c)$ and $q_n(c)=g_n|c|\tau_{\rm int}$, respectively. The conditional excitation probabilities of the two disjoint analysis branches are given by
\begin{align}
p_{I,n}(c,\mathbf s_n) &=\frac{B_{\rm sp}}{2} -\frac{A_{\rm eff}}{2}\cos[\psi_n(c)]\sin[q_n(c)],\nonumber\\
p_{Q,n}(c,\mathbf s_n) &=\frac{B_{\rm sp}}{2} -\frac{A_{\rm eff}}{2}\sin[\psi_n(c)]\sin[q_n(c)].
\label{eq:jbps_branch_probabilities_r6}
\end{align}

The phase enters through the angular factors, whereas the symbol magnitude and Rabi gain enter through the periodic factor $\sin q_n$. This periodicity makes local response inversion noise-amplifying and potentially ambiguous near and beyond a Rabi fold. With $K_m$ atoms in each branch, the observed excited-state count is distributed as
\begin{equation}
k_b[n]\mid c,\mathbf s_n \sim\operatorname{Bin}\!\left(K_m,p_{b,n}(c,\mathbf s_n)\right), \quad b\in\{I,Q\}.
\label{eq:jbps_binomial_counts_r6}
\end{equation}

Since the I and Q groups are disjoint, their outcomes are conditionally independent under the main model. Hence, for $\mathbf k_n=[k_I[n],k_Q[n]]^{\mathsf T}$, the joint likelihood is given by
\begin{equation}
p(\mathbf k_n\mid c,\mathbf s_n) =\prod_{b\in\{I,Q\}} \binom{K_m}{k_b[n]} p_{b,n}^{k_b[n]}(1-p_{b,n})^{K_m-k_b[n]}.
\label{eq:full_binomial_likelihood_sec3}
\end{equation}
When hypotheses have the same $K_m$ and observed counts, the common binomial coefficients may be omitted. The resulting exact log-metric can be expressed as
\begin{align}
\Lambda_n(c,\mathbf s_n) &=\sum_{b\in\{I,Q\}}\bigl\{ k_b[n]\ln p_{b,n}(c,\mathbf s_n)\nonumber\\
&+[K_m-k_b[n]]\ln[1-p_{b,n}(c,\mathbf s_n)] \bigr\}.
\label{eq:jbps_log_likelihood_r6}
\end{align}

The conventional plug-in detector inserts one predicted state into this metric and marginalizes it over candidates with the desired bit label. Typically, a Kalman filter is first employed to obtain a point estimate of the time-varying channel state \cite{aghamohammadi1989adaptive}, after which this estimate is treated as the true state for subsequent maximum-likelihood data detection. Such point-estimate-then-detect processing neglects residual state uncertainty and possible multimodality, motivating the proposed singal detection scheme in Section \ref{subsec:jbps_receiver_r6}.

\subsection{Proposed Joint Binomial Particle Smoother-Based Signal Detection Scheme for CoRA Systems}
\label{subsec:jbps_receiver_r6}
The proposed JBPS combines particle filtering for nonlinear and non-Gaussian state inference with forward-filtering backward-simulation smoothing\cite{978374, godsill2004montecarlo}, while retaining the exact binomial likelihood and marginalizing unknown payload symbols during state tracking. Since the nonlinear and periodic binomial observation model can produce a non-Gaussian and potentially multimodal state posterior, JBPS represents the state uncertainty using a finite set of weighted particles. A particle is a numerical sample representing one plausible hypothesis for the latent receiver state. The forward particle filter represents the generally non-Gaussian state posterior by $N_{\rm PF}$ weighted hypotheses $\{\mathbf s_n^{(i)},w_n^{(i)}\}_{i=1}^{N_{\rm PF}}$. For ancestor index $a_i$, the predicted particle is generated according to $\mathbf s_n^{(i)} \sim p(\mathbf s_n\mid\mathbf s_{n-1}^{(a_i)}),$ using covariance $\mathbf Q$. The propagated phase is wrapped to $(-\pi,\pi]$, whereas $\nu_n$ and $\ell_n$ remain real-valued. For particle $i$, the observation likelihood is given by
\begin{equation}
\mathcal L_n^{(i)}=\begin{cases}
p(\mathbf k_n\mid x_{\rm p}[n],\mathbf s_n^{(i)}), &n\in\mathcal P,\\[1mm]
\displaystyle\sum_{c\in\mathcal C}P_X(c) p(\mathbf k_n\mid c,\mathbf s_n^{(i)}), &n\in\mathcal D.
\end{cases}
\label{eq:jbps_measurement_factor_r6}
\end{equation}
Thus, an uncertain payload observation updates the state through a soft symbol mixture rather than a hard virtual pilot. With the state transition used as the proposal distribution, the filtering weights are given by
\begin{equation}
\widetilde w_n^{(i)} =\overline w_{n-1}^{(i)}\mathcal L_n^{(i)}, \qquad w_n^{(i)} =\frac{\widetilde w_n^{(i)}} {\sum_{q=1}^{N_{\rm PF}}\widetilde w_n^{(q)}}.
\label{eq:jbps_forward_weight_r6}
\end{equation}
Resampling is triggered when $N_{\rm eff}^{\rm PF}[n] =[\sum_i(w_n^{(i)})^2]^{-1} <\alpha_{\rm res}N_{\rm PF}$, after which the inherited weights are reset to $1/N_{\rm PF}$.

After the forward pass, forward-filtering backward-simulation draws $N_{\rm s}$ complete state trajectories \cite{godsill2004montecarlo}. If particle $i_{n+1}^{(r)}$ has been selected at time $n+1$, the normalized probability of selecting predecessor $i$ is given by
\begin{equation}
\beta_n^{(i,r)} = \frac{ w_n^{(i)} p\!\left( \mathbf s_{n+1}^{(i_{n+1}^{(r)})} \mid\mathbf s_n^{(i)} \right)} {\displaystyle \sum_{q=1}^{N_{\rm PF}} w_n^{(q)} p\!\left( \mathbf s_{n+1}^{(i_{n+1}^{(r)})} \mid\mathbf s_n^{(q)} \right)}.
\label{eq:ffbsi_backward_weight_r6}
\end{equation}

Conditioned on the $r$th smoothed trajectory, the candidate-symbol posterior is given by
\begin{equation}
\pi_n^{(r)}(c) = \frac{ P_X(c)\exp[\Lambda_n(c,\mathbf s_n^{(r)})]} {\displaystyle \sum_{c'\in\mathcal C} P_X(c')\exp[\Lambda_n(c',\mathbf s_n^{(r)})]}.
\label{eq:trajectory_symbol_posterior_r6}
\end{equation}

The smoothed bit posterior probability can be expressed as
\begin{align}
\widehat P_{j,n}(a) =\frac{1}{N_{\rm s}} \sum_{r=1}^{N_{\rm s}} \sum_{\substack{c\in\mathcal C:\\b_j(c)=a}} \pi_n^{(r)}(c).
\end{align}

Then, the final decision is given by
\begin{align}
\widehat b_j[n] =\arg\max_{a\in\{0,1\}} \widehat P_{j,n}(a).
\label{eq:jbps_bit_decision_r6}
\end{align}

\section{Performance Analysis for CoRA Systems}
\label{sec:performance_analysis_r6}

In this section, we derive and compare the sensitivity between the classical antenna, vapor-cell receiver, and CoRA on a common input-field and noise-bandwidth reference plane. Then, we provide the finite-alphabet rate and atom-constrained PRF for CoRA assisted wireless communication systems.

\subsection{SNR Analysis for Classical Antenna, Vapor-Cell, and Atom-Array Receivers}
\label{subsec:unified_snr_benchmark_r6}

\subsubsection{Common Reference Plane and Noise Convention}

All receivers are compared at the same carrier frequency and one-sided equivalent noise bandwidth $B_{\rm n}$. The common incident peak-field amplitude is given by
\begin{equation}
E=E_0(P_t,d)|h_{\rm ch}|.
\label{eq:common_incident_field_r6}
\end{equation}

In the locally linear weak-signal region, the input-referred noise-equivalent field (NEF) $\mathcal E_r$ denotes the one-sided amplitude spectral density of the equivalent input field noise \cite{bussey2022quantum}. Assuming that this noise is approximately white over the equivalent noise bandwidth $B_{\rm n}$, the weak-signal SNR is given by
\begin{equation}
\gamma_r^{\rm ws}(E) = \frac{E^2}{\mathcal E_r^2B_{\rm n}}.
\label{eq:unified_nef_snr_r6}
\end{equation}

\subsubsection{Classical Antenna-Based Receiver}

For a copolarized receiving antenna with realized gain $G_r$ and wavelength $\lambda$, the effective aperture is given by $A_{\rm e}=\frac{G_r\lambda^2}{4\pi}$. The available signal and thermal-noise powers are given by \cite{Balanis2016Antenna}
\begin{equation}
P_{\rm r}=\frac{E^2A_{\rm e}}{2Z_0}, \qquad P_{\rm n}=k_{\rm B}T_{\rm sys}B_{\rm n},
\label{eq:classical_signal_noise_power_r6}
\end{equation}
where $T_{\rm sys}=T_{\rm ant}+(10^{\mathrm{NF}/10}-1)T_0$.

Hence, the SNR and input-referred NEF are given by
\begin{align}
&\gamma_{\rm ant}(E) =\frac{E^2A_{\rm e}} {2Z_0k_{\rm B}T_{\rm sys}B_{\rm n}}, \label{eq:classical_antenna_snr_r6}\\
&\mathcal E_{\rm ant} =\sqrt{\frac{2Z_0k_{\rm B}T_{\rm sys}}{A_{\rm e}}}.
\label{eq:classical_antenna_nef_r6}
\end{align}

\subsubsection{Vapor Cell-Based Atomic Receiver}

Following the RF-superheterodyne and balanced coherent optical detection model in \cite{ref_gong_overview}, the total RF Rabi frequency around the strong local-oscillator operating point is given by
\begin{equation}
\Omega_{{\rm RF},r}(t) = \Omega_{{\rm LO},r} +\Omega_{\rm s}(t), \qquad \Omega_{\rm s}(t) =\frac{\mu_{\rm RF}}{\hbar}E(t),
\label{eq:vapor_rf_rabi_model}
\end{equation}
where $E(t)$ is the weak received RF electric-field component, $\mu_{\rm RF}$ is the dipole moment of the selected RF transition, and $\Omega_{{\rm LO},r}$ is the Rabi frequency induced by the strong RF local oscillator. The weak-signal approximation requires $|\Omega_{\rm s}|\ll|\Omega_{{\rm LO},r}|$.

At this RF-LO-biased operating point, the atomic transconductance and transmitted probe power are given by
\begin{align}
&\chi_{{\rm s},r}^{\prime} \triangleq \left. \frac{\partial\chi_r}{\partial\Omega_{\rm RF}} \right|_{\Omega_{\rm RF}=\Omega_{{\rm LO},r}}, \label{eq:vapor_atomic_transconductance}\\
&P_{{\rm out},r} = P_{\rm p} \exp\!\left[ -\frac{2\pi L_{\rm cell}}{\lambda_{\rm p}} \operatorname{Im} \left\{\chi_r(\Omega_{{\rm LO},r})\right\} \right],
\label{eq:vapor_probe_output_power}
\end{align}
where $\chi_r$ is the Doppler-averaged probe susceptibility, $P_{\rm p}$ is the incident probe power, $L_{\rm cell}$ is the cell length, and $\lambda_{\rm p}$ is the probe wavelength. The corresponding field-to-optical conversion coefficient is given by
\begin{equation}
\kappa_{1,r} = \frac{\pi L_{\rm cell}\mu_{\rm RF}} {\lambda_{\rm p}\hbar} \left|\chi_{{\rm s},r}^{\prime}\right|.
\label{eq:vapor_kappa_one}
\end{equation}

After balanced coherent optical detection and the subsequent photodetector amplification, the effective power gain is given by
\begin{equation}
\beta_r = 4\alpha_{\rm pd}^{2}G_{\rm PD}Z_0 P_{\rm LO}^{\rm opt}P_{{\rm out},r} \kappa_{1,r}^{2}\sin^{2}(\phi_{1,r}),
\label{eq:vapor_effective_power_gain}
\end{equation}
where $\alpha_{\rm pd}$ is the photodetector responsivity, $G_{\rm PD}$ is the photodetection-chain power gain, $P_{\rm LO}^{\rm opt}$ is the optical local-oscillator power, and $Z_0$ is the free-space impedance. Moreover, $\phi_{1,r}$ is the total atomic--optical phase difference, which can be expressed as
\begin{equation}
\phi_{1,r} = \phi_{\rm LO}^{\rm opt} -\phi_{{\rm p},r}(\Omega_{{\rm LO},r}) -\arg\!\left(\chi_{{\rm s},r}^{\prime}\right).
\label{eq:vapor_atomic_optical_phase}
\end{equation}
The optical homodyne phase is locked to the maximum-slope quadrature, such that $|\sin(\phi_{1,r})|\simeq1$.

For a received peak electric-field amplitude $E$, the corresponding output signal power is given by
\begin{equation}
P_{{\rm S},r}(E) = \frac{\beta_rE^2}{2Z_0}.
\label{eq:vapor_signal_power_field}
\end{equation}

Consistent with \cite{ref_gong_overview}, the adopted vapor-cell noise power contains intrinsic electronic thermal noise (ITN), photon shot noise (PSN), and atomic quantum projection noise (QPN), which are given by
\begin{align}
N_{{\rm ITN},r} &= k_{\rm B}T_{\rm PD}G_{\rm PD}B_{\rm n}, \label{eq:vapor_itn_power}\\
N_{{\rm PSN},r} &= 2q_{\rm e}G_{\rm PD}\alpha_{\rm pd} \left(P_{\rm LO}^{\rm opt}+P_{{\rm out},r}\right)B_{\rm n}, \label{eq:vapor_psn_power}\\
N_{{\rm QPN},r} &= \frac{\beta_r}{2Z_0} \left( \frac{\hbar\sqrt{\Gamma_{{\rm coh},r}}} {\mu_{\rm RF}\sqrt{N_{{\rm at},r}^{\rm eff}}} \right)^{\!2} B_{\rm n},
\label{eq:vapor_qpn_power}
\end{align}
where $T_{\rm PD}$ is the equivalent electronic-noise temperature, $q_{\rm e}$ is the elementary charge, $B_{\rm n}$ is the post-detection noise bandwidth, $\Gamma_{{\rm coh},r}$ is the effective atomic-coherence decay rate used in the QPN model, and $N_{{\rm at},r}^{\rm eff}$ is the number of coherently participating atoms.

Therefore, the weak-signal SNR of vapor-cell receiver $r$ is given by
\begin{equation}
\gamma_{{\rm vap},r}^{\rm ws}(E) = \frac{P_{{\rm S},r}(E)} {N_{{\rm ITN},r}+N_{{\rm PSN},r}+N_{{\rm QPN},r}} = \frac{\beta_rE^2} {2Z_0N_{{\rm tot},r}},
\label{eq:vapor_weak_signal_snr}
\end{equation}
where $N_{{\rm tot},r}\triangleq N_{{\rm ITN},r}+N_{{\rm PSN},r}+N_{{\rm QPN},r}$. Under the definition $\gamma_{{\rm vap},r}^{\rm ws}(E) =E^2/(\mathcal E_{{\rm vap},r}^{2}B_{\rm n})$, the input-referred noise-equivalent field is given by
\begin{equation}
\mathcal E_{{\rm vap},r} = \sqrt{ \frac{2Z_0N_{{\rm tot},r}} {\beta_rB_{\rm n}} }.
\label{eq:vapor_nef_code_aligned}
\end{equation}

Note that Eqs. \eqref{eq:vapor_atomic_transconductance}--\eqref{eq:vapor_nef_code_aligned} apply only within the locally linear RF-LO-biased regime. The strong-field results are instead calculated from the full nonlinear susceptibility $\chi_r(\Omega_{\rm RF})$ \cite{11692870}.

\subsubsection{CoRA-Based Atomic Receiver}

At the common incident-field plane, the effective field-to-Rabi coefficient is defined as
\begin{equation}
\kappa_{\rm a}\triangleq\alpha_g\zeta_{\rm pol}\kappa_E.
\label{eq:effective_array_field_rabi}
\end{equation}
Conditioned on tracked phase $\psi$, the field-amplitude Fisher information of the exact binomial observation is given by \cite{sengijpta1995fundamentals}
\begin{equation}
J_E(E,\psi) =K_m\sum_{b\in\{I,Q\}} \frac{[\partial p_b(E,\psi)/\partial E]^2} {p_b(E,\psi)[1-p_b(E,\psi)]}.
\label{eq:array_field_fisher}
\end{equation}

\begin{lemma}[Weak-field array limit]
\label{lem:array_weak_field_nef}
For $q=\kappa_{\rm a}E\tau_{\rm int}\ll1$ and conditioned on $\psi$, the weak-field Fisher information is given by
\begin{equation}
J_E^{\rm ws} =\frac{K_m(A_{\rm eff}\kappa_{\rm a}\tau_{\rm int})^2} {B_{\rm sp}(2-B_{\rm sp})}.
\label{eq:lemma1_fisher}
\end{equation}
Consequently, the weak-signal SNR and QPN-limited NEF are given by \cite{zhang2026microwave}
\begin{align}
&\gamma_{\rm arr}^{\rm ws}(E) =E^2J_E^{\rm ws}, \label{eq:lemma1_snr}\\
&\mathcal E_{\rm arr}^{\rm QPN} =\frac{\sqrt{B_{\rm sp}(2-B_{\rm sp})}} {A_{\rm eff}\kappa_{\rm a}\tau_{\rm int}\sqrt{K_mB_{\rm n}}}.
\label{eq:lemma1_nef}
\end{align}
\end{lemma}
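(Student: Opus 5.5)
We substitute the branch probabilities \eqref{eq:jbps_branch_probabilities_r6} into the Fisher information \eqref{eq:array_field_fisher} and take the leading order in $q=\kappa_{\rm a}E\tau_{\rm int}$. Write the I/Q probabilities as functions of $E$ with the tracked phase $\psi$ held fixed:
\begin{equation}
p_I=\tfrac{B_{\rm sp}}{2}-\tfrac{A_{\rm eff}}{2}\cos\psi\,\sin(\kappa_{\rm a}E\tau_{\rm int}),\qquad
p_Q=\tfrac{B_{\rm sp}}{2}-\tfrac{A_{\rm eff}}{2}\sin\psi\,\sin(\kappa_{\rm a}E\tau_{\rm int}).
\end{equation}
Here we use that the effective Rabi angle is $q=\kappa_{\rm a}E\tau_{\rm int}$ once the common field plane \eqref{eq:common_incident_field_r6} and $\kappa_{\rm a}$ in \eqref{eq:effective_array_field_rabi} are inserted into \eqref{eq:wireless_rabi_phasor_r6}. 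Differentiating gives
\begin{equation}
\partial p_I/\partial E=-\tfrac{A_{\rm eff}}{2}\kappa_{\rm a}\tau_{\rm int}\cos\psi\cos q,\qquad
\partial p_Q/\partial E=-\tfrac{A_{\rm eff}}{2}\kappa_{\rm a}\tau_{\rm int}\sin\psi\cos q.
\end{equation}

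Next we take the weak-field limit $q\ll1$. Then $\cos q\to1$, and both $p_I$ and $p_Q$ tend to $B_{\rm sp}/2$, so
\begin{equation}
p_b(1-p_b)\to\tfrac{B_{\rm sp}}{2}\bigl(1-\tfrac{B_{\rm sp}}{2}\bigr)=\tfrac{B_{\rm sp}(2-B_{\rm sp})}{4}.
\end{equation}
The two branches then share the same denominator, so the sum over $b\in\{I,Q\}$ yields
\begin{equation}
\tfrac{(A_{\rm eff}\kappa_{\rm a}\tau_{\rm int})^2}{4}\bigl(\cos^2\psi+\sin^2\psi\bigr)=\tfrac{(A_{\rm eff}\kappa_{\rm a}\tau_{\rm int})^2}{4}.
\end{equation}
The $\psi$-dependence therefore cancels. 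This is precisely why I/Q quadrature subarrays are used instead of a single branch, which would carry a residual $\cos^2\psi$ factor. Multiplying by $K_m$ gives \eqref{eq:lemma1_fisher}. The corrections are $O(q^2)$ in the numerator through $\cos^2 q$, and $O(q)$ in the denominator through the $\sin q$ term in $p_b$. We will state the result as the limit $q\to0$, or equivalently to leading order, and note that the $O(q)$ denominator terms from the two branches do not cancel in general.

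For the SNR and NEF we use the Cram\'er--Rao bound. An efficient, locally unbiased estimator of $E$ attains variance $1/J_E^{\rm ws}$, so the per-snapshot amplitude SNR is $E^2/\mathrm{var}(\hat E)=E^2J_E^{\rm ws}$, which is \eqref{eq:lemma1_snr}. Matching this to the common NEF convention \eqref{eq:unified_nef_snr_r6}, $E^2J_E^{\rm ws}=E^2/(\mathcal E_{\rm arr}^2B_{\rm n})$, gives $\mathcal E_{\rm arr}^2=1/(J_E^{\rm ws}B_{\rm n})$. Substituting \eqref{eq:lemma1_fisher} yields \eqref{eq:lemma1_nef}. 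As a consistency check, dividing by $\kappa_{\rm a}$ and setting $B_{\rm n}=1$ recovers the QPN Rabi uncertainty \eqref{eq:sql_rabi_sec2} with $N=K_m$.

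The main obstacle is the interpretive step, not the algebra. Identifying the Fisher-information-based SNR with the bandwidth-referred NEF requires taking one I/Q snapshot per noise-bandwidth sample, so that $B_{\rm n}$ plays the role of the snapshot rate, and assuming estimator efficiency, which holds only asymptotically or under the locally linear approximation. We would state these conventions explicitly, following \cite{zhang2026microwave}, rather than claim a strict equality for finite $K_m$.
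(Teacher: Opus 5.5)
Your proposal is correct and follows the paper's appendix essentially step for step. You substitute the matched I/Q probabilities into the binomial Fisher information and evaluate at the zero-field point, so both denominators equal $B_{\rm sp}(2-B_{\rm sp})/4$ and $\cos^2\psi+\sin^2\psi=1$ removes the phase. You then read off the SNR and NEF from $\gamma_{\rm arr}^{\rm ws}=E^2J_E^{\rm ws}$ and $1/J_E^{\rm ws}=(\mathcal E_{\rm arr}^{\rm QPN})^2B_{\rm n}$; the paper simply adopts these as conventions, whereas you justify them via the Cram\'er--Rao bound and state the efficiency and bandwidth-matching assumptions explicitly.

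One trivial slip in your side check: recovering the Rabi uncertainty in \eqref{eq:sql_rabi_sec2} (with $B_{\rm n}=1$, $N=K_m$) requires multiplying $\mathcal E_{\rm arr}^{\rm QPN}$ by $\kappa_{\rm a}$, not dividing.
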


\begin{IEEEproof}
See Appendix~\ref{app:proof_lemma1}.
\end{IEEEproof}

\begin{remark}[Phase-robust weak-field sensitivity]
The orthogonal I/Q branches contribute complementary $\cos^2\psi$ and $\sin^2\psi$ terms to the weak-field Fisher information. Since $\cos^2\psi+\sin^2\psi=1$, their joint observation eliminates phase-dependent sensitivity nulls and yields a phase-invariant local amplitude sensitivity under the matched-branch model. However, this phase robustness does not eliminate the need for phase tracking, because \(\psi\) still determines the I/Q symbol mapping in Section \ref{subsec:dynamic_binomial_likelihood}.
\end{remark}

\subsubsection{Conditional Cross-Receiver Sensitivity Comparison}

The total array NEF, including QPN, technical array noise, and external field noise, is given by
\begin{equation}
(\mathcal E_{\rm arr}^{\rm tot})^2 =(\mathcal E_{\rm arr}^{\rm QPN})^2 +(\mathcal E_{\rm arr}^{\rm tech})^2 +\mathcal E_{\rm ext}^2.
\label{eq:total_array_nef}
\end{equation}

\begin{lemma}[Receiver sensitivity crossover]
\label{lem:receiver_crossover}
At the same input-field reference plane and $B_{\rm n}$, the atom array has higher local weak-signal SNR than receiver $r$ if and only if the following condition is satisfied:
\begin{equation}
\mathcal E_{\rm arr}^{\rm tot}<\mathcal E_r.
\label{eq:lemma2_nef_condition}
\end{equation}
Under the QPN-only array model, the crossover atom numbers for the antenna and vapor-cell receivers are given by
\begin{align}
&K_{m,\rm ant}^{\star} =\frac{B_{\rm sp}(2-B_{\rm sp})A_{\rm e}} {2Z_0k_{\rm B}T_{\rm sys}B_{\rm n} (A_{\rm eff}\kappa_{\rm a}\tau_{\rm int})^2}, \label{eq:lemma2_ant_threshold}\\
&K_{m,\rm vap}^{\star}= \frac{B_{\rm sp}(2-B_{\rm sp})\beta_r} {2Z_0N_{{\rm tot},r} \left(A_{\rm eff}\kappa_{\rm a}\tau_{\rm int}\right)^2}.
\label{eq:lemma2_vap_threshold}
\end{align}
\end{lemma}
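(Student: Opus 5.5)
The plan is to work entirely through the unified NEF convention in \eqref{eq:unified_nef_snr_r6}, so that every receiver's weak-signal SNR has the form $E^2/(\mathcal E^2 B_{\rm n})$ with the same $E$ and $B_{\rm n}$.

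\emph{Step 1: the NEF criterion \eqref{eq:lemma2_nef_condition}.} Write the total-noise array SNR as $\gamma_{\rm arr}=E^2/[(\mathcal E_{\rm arr}^{\rm tot})^2B_{\rm n}]$. This follows from \eqref{eq:total_array_nef}, because independent input-referred noise contributions add in power spectral density. In the QPN-only case it reduces to Lemma~\ref{lem:array_weak_field_nef}. For receiver $r$, the SNR is $\gamma_r=E^2/(\mathcal E_r^2B_{\rm n})$. Fix $E>0$; both SNRs share the numerator $E^2$ and the positive factor $B_{\rm n}$. Hence $\gamma_{\rm arr}>\gamma_r$ holds iff $(\mathcal E_{\rm arr}^{\rm tot})^2<\mathcal E_r^2$. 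Since both NEFs are nonnegative, this is equivalent to $\mathcal E_{\rm arr}^{\rm tot}<\mathcal E_r$. The conclusion does not depend on $E$ within the locally linear regime, which is what makes it a receiver-level statement.

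\emph{Step 2: the crossover atom numbers.} Set $\mathcal E_{\rm arr}^{\rm tech}=\mathcal E_{\rm ext}=0$ and substitute \eqref{eq:lemma1_nef}. The squared condition becomes
\[
\frac{B_{\rm sp}(2-B_{\rm sp})}{(A_{\rm eff}\kappa_{\rm a}\tau_{\rm int})^2K_mB_{\rm n}}<\mathcal E_r^2 .
\]
Because the left side is strictly decreasing in $K_m$, this is equivalent to $K_m>K_m^\star$, where
\[
K_m^\star=\frac{B_{\rm sp}(2-B_{\rm sp})}{(A_{\rm eff}\kappa_{\rm a}\tau_{\rm int})^2B_{\rm n}\mathcal E_r^2}.
\]
Inserting \eqref{eq:classical_antenna_nef_r6}, i.e.\ $\mathcal E_{\rm ant}^2=2Z_0k_{\rm B}T_{\rm sys}/A_{\rm e}$, gives \eqref{eq:lemma2_ant_threshold}. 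Inserting \eqref{eq:vapor_nef_code_aligned}, i.e.\ $\mathcal E_{{\rm vap},r}^2B_{\rm n}=2Z_0N_{{\rm tot},r}/\beta_r$, gives \eqref{eq:lemma2_vap_threshold}; here $B_{\rm n}$ cancels because $N_{{\rm tot},r}$ already scales with $B_{\rm n}$.

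\emph{Main obstacle.} The algebra is routine. The delicate point is justifying Step 1's premise that the array SNR is genuinely of NEF form, with the Fisher-information SNR \eqref{eq:lemma1_snr} identified with $E^2/(\mathcal E^2B_{\rm n})$. This requires the white-noise convention and $B_{\rm n}$ scaling adopted in Lemma~\ref{lem:array_weak_field_nef}. It also requires that the technical and external terms combine additively in squared NEF, as in \eqref{eq:total_array_nef}. I will take both as the model assumptions stated above.

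In addition, the thresholds should be read as real numbers. The physical minimum active-atom count is then $\lfloor K_m^\star\rfloor+1$, and equality $K_m=K_m^\star$ gives equal SNR rather than a strict improvement.
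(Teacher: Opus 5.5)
Your proposal is correct and matches the paper's proof: the SNR comparison reduces to comparing squared NEFs at common $E$ and $B_{\rm n}$, and both thresholds follow by substituting the QPN-only array NEF together with the antenna and vapor-cell NEFs; your algebra, including the absorption of the explicit $B_{\rm n}$ in the vapor-cell case, checks out. The only difference is that the paper first solves the general case $(\mathcal E_{\rm arr}^{\rm tot})^2=C_{\rm arr}/K_m+\mathcal E_{\rm floor}^2$, obtaining $K_m>C_{\rm arr}/(\mathcal E_r^2-\mathcal E_{\rm floor}^2)$ with a finite crossover only when $\mathcal E_r>\mathcal E_{\rm floor}$, and then sets $\mathcal E_{\rm floor}=0$, whereas you specialize to the QPN-only model directly.
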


\begin{IEEEproof}
For a nonzero input field $E$ and noise bandwidth $B_{\rm n}$, the weak-signal SNR is inversely proportional to the squared NEF. Hence, $\gamma_{\rm arr}^{\rm ws}(E)>\gamma_r^{\rm ws}(E)$ holds if and only if $\mathcal E_{\rm arr}^{\rm tot}<\mathcal E_r$. Substituting $(\mathcal E_{\rm arr}^{\rm tot})^2 =C_{\rm arr}/K_m+\mathcal E_{\rm floor}^2$ shows that a finite crossover exists only when $\mathcal E_r^2>\mathcal E_{\rm floor}^2$. Rearranging this inequality gives $K_m>C_{\rm arr}/ (\mathcal E_r^2-\mathcal E_{\rm floor}^2)$. The QPN-only antenna and vapor-cell thresholds follow by setting $\mathcal E_{\rm floor}=0$ and substituting their corresponding NEFs.
\end{IEEEproof}

\subsection{State-Averaged Finite-Alphabet Rate}
\label{subsec:finite_alphabet_rate}

Since the CoRA produces finite-count binomial observations, the Gaussian-input Shannon expression $\log_2(1+\gamma)$ does not characterize the achievable rate of the finite-alphabet atomic count channel. Instead, using the transition law in Eq.~\eqref{eq:full_binomial_likelihood_sec3}, we define the state-conditioned finite-alphabet rate for a fixed state realization $\mathcal S=s$ as
\begin{align}
&R_{\rm FA}(s;P_X) \triangleq \frac{1}{T_{\rm sym}} I(X;\mathbf K\mid\mathcal S=s) \nonumber\\
&= \frac{1}{T_{\rm sym}} \sum_{c\in\mathcal C}P_X(c) \sum_{\mathbf k}P(\mathbf k\mid c,s) \log_2 \frac{P(\mathbf k\mid c,s)} {P(\mathbf k\mid s)},
\label{eq:count_finite_alphabet_rate_sec4}
\end{align}
where $\mathbf K=(K_I,K_Q)$ is the random count pair and
\begin{equation}
P(\mathbf k\mid s) = \sum_{c\in\mathcal C} P_X(c)P(\mathbf k\mid c,s).
\end{equation}
Accordingly, we have
\begin{equation}
0 \leq R_{\rm FA}(s;P_X) \leq \frac{H(X)}{T_{\rm sym}} \leq \frac{\log_2 M_b}{T_{\rm sym}}.
\end{equation}

Given a specified input distribution and state process, the state-averaged finite-alphabet rate is defined as
\begin{equation}
\overline R_{\rm FA} \triangleq \mathbb E_{\mathcal S} \left[ R_{\rm FA}(\mathcal S;P_X) \right].
\label{eq:state_averaged_finite_alphabet_rate_sec4}
\end{equation}

\subsection{Hardware-Constrained Information-PRF Tradeoff}
\label{subsec:hardware_goodput_prf_r6}

Considering one burst consumes $K_{\rm burst}=2K_mN_{\rm b}$ atoms, and the corresponding buffer-feasibility constraint is $K_{\rm burst}\leq K_0.$ The preloaded buffer supports the instantaneous in-burst interrogation rate, whereas nonoverlapping packet scheduling and atom replenishment impose separate long-term ceilings. Specifically, $f_{\rm int}^{(\rm burst)}=1/{T_{\rm sym}}$ is the instantaneous symbol-interrogation rate within an admitted burst. The maximum packet repetition rate permitted by nonoverlapping packet scheduling can be expressed as
\begin{align}
f_{\rm clk}=\frac{1}{N_{\rm b}T_{\rm sym}}.
\label{eq:packet_clock_ceiling}
\end{align}
Furthermore, the maximum long-term packet rate sustained by the available initialized-atom flux, which is given by
\begin{align}
f_{\rm atom}=\frac{\Phi_{\rm q}}{2K_mN_{\rm b}}.
\label{eq:atom_supply_packet_rate}
\end{align}

Consequently, the sustainable physical PRF is given by
\begin{equation}
f_{\rm pkt} =\min\{f_{\rm clk},f_{\rm atom}\}
\label{eq:packet_prf_r6}
\end{equation}

{\begin{proposition}[Sensitivity--sustainability scaling law]
\label{prop:sensitivity_sustainability}
In the weak-field QPN-limited regime and the atom-flux-limited regime $f_{\rm pkt}=f_{\rm atom}$, with all parameters other than $K_m$ fixed, we have $\gamma_{\rm arr}^{\rm ws}\propto K_m, \mathcal E_{\rm arr}^{\rm QPN}\propto K_m^{-1/2},$ and $f_{\rm pkt}\propto K_m^{-1}.$ Moreover, multiplying Eq.~\eqref{eq:lemma1_snr} by Eq.~\eqref{eq:atom_supply_packet_rate} gives
\begin{equation}
\gamma_{\rm arr}^{\rm ws}(E) f_{\rm pkt} = \frac{E^2\Phi_{\rm q}(A_{\rm eff}\kappa_{\rm a}\tau_{\rm int})^2} {2N_{\rm b}B_{\rm sp}(2-B_{\rm sp})},
\label{eq:snr_prf_invariant}
\end{equation}
which is independent of $K_m$ under the stated assumptions.
\end{proposition}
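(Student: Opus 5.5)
The plan is to read the three scaling relations directly off the closed-form expressions already established, and then multiply. First, Lemma~\ref{lem:array_weak_field_nef} applies in the weak-field QPN-limited regime, so Eq.~\eqref{eq:lemma1_fisher} and Eq.~\eqref{eq:lemma1_snr} give
\begin{equation*}
\gamma_{\rm arr}^{\rm ws}(E)=\frac{E^2K_m(A_{\rm eff}\kappa_{\rm a}\tau_{\rm int})^2}{B_{\rm sp}(2-B_{\rm sp})}.
\end{equation*}
With $E$, $A_{\rm eff}$, $B_{\rm sp}$, $\kappa_{\rm a}$ and $\tau_{\rm int}$ held fixed, this expression is linear in $K_m$. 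Likewise, Eq.~\eqref{eq:lemma1_nef} shows that $\mathcal E_{\rm arr}^{\rm QPN}$ depends on $K_m$ only through the factor $1/\sqrt{K_m}$, which gives $\mathcal E_{\rm arr}^{\rm QPN}\propto K_m^{-1/2}$.

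Second, the atom-flux-limited hypothesis $f_{\rm pkt}=f_{\rm atom}$ lets me substitute Eq.~\eqref{eq:atom_supply_packet_rate} into Eq.~\eqref{eq:packet_prf_r6}. This gives $f_{\rm pkt}=\Phi_{\rm q}/(2K_mN_{\rm b})$, and with $\Phi_{\rm q}$ and $N_{\rm b}$ fixed, $f_{\rm pkt}\propto K_m^{-1}$. Multiplying the two expressions cancels $K_m$ and yields Eq.~\eqref{eq:snr_prf_invariant}, which is manifestly independent of $K_m$.

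The only point needing care, and the one I regard as the main obstacle, is the consistency of the regime assumptions as $K_m$ varies. Three conditions must hold simultaneously over the range of $K_m$ considered.
\begin{itemize}
\item The weak-field condition $q=\kappa_{\rm a}E\tau_{\rm int}\ll1$ must hold. It does not involve $K_m$, so it is unaffected.
\item The QPN-only model must remain valid, i.e., $(\mathcal E_{\rm arr}^{\rm QPN})^2$ must dominate the floor terms in Eq.~\eqref{eq:total_array_nef}. Because this term decreases as $1/K_m$, it holds only for $K_m$ well below $C_{\rm arr}/\mathcal E_{\rm floor}^2$, in the notation of the proof of Lemma~\ref{lem:receiver_crossover}.
\item The binding constraint must satisfy $f_{\rm atom}\le f_{\rm clk}$, which is equivalent to $K_m\ge \Phi_{\rm q}T_{\rm sym}/2$ by Eqs.~\eqref{eq:packet_clock_ceiling}--\eqref{eq:atom_supply_packet_rate}. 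The buffer constraint $2K_mN_{\rm b}\le K_0$ gives a further upper limit.
\end{itemize}

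I will therefore state the result on the interval of $K_m$ where all of these hold, and note that outside it the product is no longer invariant. For $K_m<\Phi_{\rm q}T_{\rm sym}/2$, the clock ceiling binds and the product grows linearly in $K_m$. For large $K_m$, the technical and external noise floors cause the SNR to saturate.
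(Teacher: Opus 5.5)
Your proposal is correct and follows the paper's own argument. The paper reads $\gamma_{\rm arr}^{\rm ws}\propto K_m$ and $\mathcal E_{\rm arr}^{\rm QPN}\propto K_m^{-1/2}$ off Lemma~\ref{lem:array_weak_field_nef}, sets $f_{\rm pkt}=f_{\rm atom}=\Phi_{\rm q}/(2K_mN_{\rm b})$, and multiplies so that $K_m$ cancels; your regime-consistency checks are a correct refinement of what the paper leaves implicit in ``under the stated assumptions.'' In particular, $f_{\rm atom}\le f_{\rm clk}$ holds if and only if $K_m\ge\Phi_{\rm q}T_{\rm sym}/2$, and the QPN term must dominate the noise floor.
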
}

\begin{figure}[t]
\centering \includegraphics[width=2.8in]{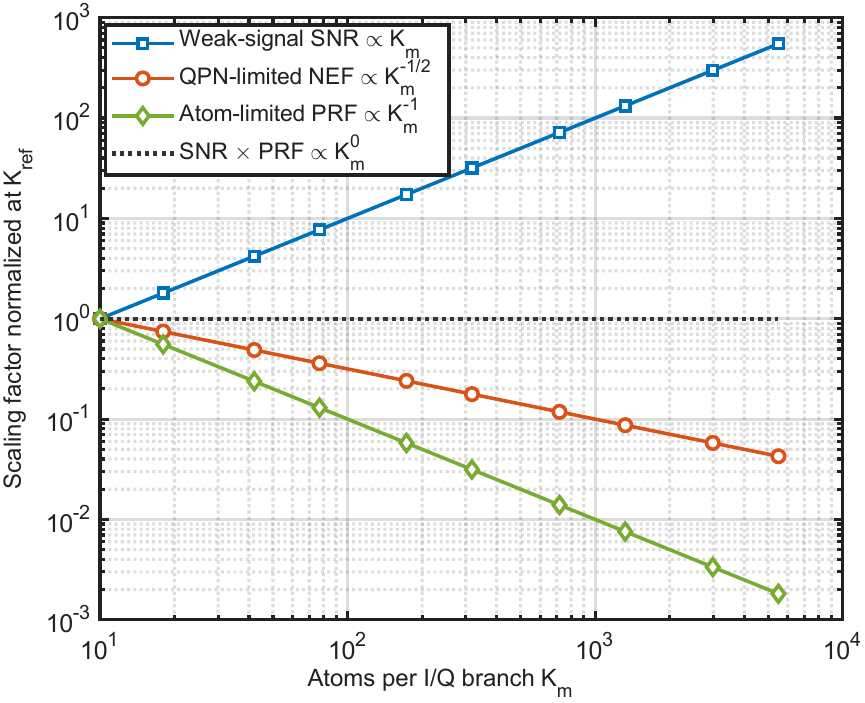}
\caption{{Sensitivity-sustainability scaling with active atom number $K_m$ per I/Q branch for the proposed CoRA.}}
\label{fig:information_prf_tradeoff}
\end{figure}

{Fig.~\ref{fig:information_prf_tradeoff} highlights the scaling law in Proposition~\ref{prop:sensitivity_sustainability}, where all quantities are normalized to their respective values at the reference atom number $K_{\rm ref}=10$. In the weak-field regime, the SNR increases approximately linearly with $K_m$ and the corresponding QPN-limited NEF decreases as $K_m^{-1/2}$. Once the initialized-atom flux is the active long-term constraint, the sustainable PRF decreases as $K_m^{-1}$. The product $\gamma_{\rm arr}^{\rm ws}f_{\rm pkt}$ is therefore independent of $K_m$ according to Proposition~\ref{prop:sensitivity_sustainability}. Thus, increasing the atom number redistributes a fixed atom-supply resource between instantaneous sensitivity and long-term packet sustainability rather than providing an unconstrained communication gain.}

\section{Numerical Results}
\label{sec:numerical_results}

In this section, we first evaluate the performance of the proposed {STM architecture and JBPS detection} for CoRA-based wireless communications, and then compare the specific communication metrics for the classical antenna, vapor-cell, and CoRA receivers.

\subsection{Simulation Setups}
\label{subsec:numerical_configuration_r6}

\begin{table}[t]
\centering
\caption{Communication-system and proposed-receiver parameters.}
\label{tab:reference_parameters}
\setlength{\tabcolsep}{3pt} \renewcommand{\arraystretch}{0.95}

\begin{tabular}{p{0.45\columnwidth}p{0.47\columnwidth}}
\hline \textbf{Parameter} & \textbf{Value} \\
\hline

Carrier frequency $f_c$ & $6.6~\mathrm{GHz}$ \\
Communication distance $d$ & $20~\mathrm{m}$ \\
Transmit antenna gain $G_t$ & $0~\mathrm{dBi}$ \\
Packet channel & Block Rician fading \\
Rician factor $K_R$ & $10~\mathrm{dB}$ \\
Shadowing deviation $\sigma_{\rm sh}$ & $2~\mathrm{dB}$ \\
Modulation Order & QPSK \\
Numerical samples per symbol & $32$ \\
Discrete atomic-gate gain $\alpha_g$ & $0.9645$ \\
Symbol interval $T_{\rm sym}$ & $1.0~\mu\mathrm{s}$ \\
Atomic interaction time $\tau_{\rm int}$ & $0.8~\mu\mathrm{s}$ \\
Field-to-Rabi coefficient $\kappa_E/(2\pi)$ & $4.88~\mathrm{kHz}/(1~\mu\mathrm{V/cm})$ \\
Reference Rabi scale $g_{\rm ref}$ & $1~\mathrm{rad/s}$ \\
Effective SPAM pair $(A_{\rm eff},B_{\rm sp})$ & $(0.438,0.680)$ \\
Population-loss rate $\Gamma_{\rm pop}/(2\pi)$ & $2~\mathrm{kHz}$ \\
Atoms per I/Q branch $K_m$ & $343$ \\
Burst length $N_{\rm b}$ & $16$ \\
Atom-supply flux $\Phi_q$ & $3\times10^4~\mathrm{s}^{-1}$ \\

Frequency-state correlation, $\rho_\nu$ & $0.985$ \\
Number of particles $N_{\rm PF}$ & $512$ \\
Number of smoothed paths $N_{\rm s}$ & $160$ \\
Resampling threshold $\alpha_{\rm res}$ & $0.55$ \\
Polarization projection $\zeta_{\rm pol}$ & $1$ \\
Scheduled I/Q-subarray pairs $M$ & $16$ \\
Payload-symbol prior $P_X(c)$ & $1/4,\;c\in\mathcal C$ \\
\hline
\end{tabular}
\end{table}

Table~\ref{tab:reference_parameters} summarizes the common link and proposed-receiver parameters. In the performance comparison for receivers with different front-end antennas, all front ends are compared using the same incident peak field, carrier frequency, and $B_{\rm n}=1~\mathrm{MHz}$ noise bandwidth. The classical receiver reference is a copolarized finite-gain antenna with $G_r=5.5~\mathrm{dBi}$, $\mathrm{NF}=6~\mathrm{dB}$, and $T_{\rm ant}=T_0=290~\mathrm{K}$. The vapor cell-based atomic receiver uses the same $^{87}\mathrm{Rb}$ $|68D_{5/2},5/2\rangle\leftrightarrow |69P_{3/2},3/2\rangle$ RF transition as the CoRA. We consider the classic four-level electronic ladder that is $5S_{1/2}\!\rightarrow5P_{3/2}\!\rightarrow68D_{5/2} \!\rightarrow69P_{3/2}$, with $780.24$- and $479.63$-nm optical fields and optical dipole magnitudes $\{2.98915,0.009815\}ea_0$. The common RF transition has the Alkali Rydberg Calculator (ARC) matrix element $3813.27ea_0$ \cite{vsibalic2017arc}, while all atomic receiver calculations use the experimentally calibrated $\kappa_E$ in Table~\ref{tab:reference_parameters}. The adopted $290$-K population-loss rates for $5P$ $68D$, and $69P$ are $2\pi\{6.0659~\mathrm{MHz}, 1.0775~\mathrm{kHz},0.8444~\mathrm{kHz}\}$, respectively. The vapor cell-based atomic receiver uses a $10$-cm cell, density $1.5\times10^{16}~\mathrm{m}^{-3}$, $290$-K temperature, a $30$-mW optical homodyne LO, $20$-kHz environmental dephasing, $50$-kHz linewidth per optical source, and a $1$-kHz RF-LO linewidth. The steady-state susceptibility is averaged over the one-dimensional Maxwell--Boltzmann velocity distribution before its field derivative and Liouvillian modulation bandwidth are evaluated. For each ladder, the center-frequency weak-signal NEF is minimized over $P_{\rm p}\in[10~\mu\mathrm W,2~\mathrm{mW}]$, $P_{\rm c}\in[1,160]~\mathrm{mW}$, $E_{\rm LO}^{\rm RF}\in[5,50]~\mathrm{mV/m}$, $\Delta\in[-5,5]~\mathrm{MHz}$, and $r_{\rm b}\in[0.25,0.80]~\mathrm{mm}$. The total optical-drive cap is $160~\mathrm{mW}$ and the probe-transmission floor is $0.01$.

In the performance evaluation of burst signal detection, we provide four baseline schemes under the same transmitted bursts, dynamic states, and binomial observations.
\begin{itemize}
\item \emph{Binomial MAP (BMAP)}: the composite channel-and-reference state is estimated from the leading acquisition pilots and then the phase and gain are frozen, with the phase increment set to zero. This baseline isolates the penalty of omitting within-burst tracking.
\item \emph{Extended Kalman filtering-based BMAP (EKF-BMAP)}: an EKF predicts the composite atom state at every symbol and updates it only at known pilot positions \cite{aghamohammadi1989adaptive}. BMAP then uses the resulting plug-in state.
\item \emph{Decision-directed EKF (DD-EKF)}: a confidence-gated payload decision updates the EKF for subsequent symbols \cite{mehrpouyan2012joint}. The current symbol is not redetected after its decision-directed update.
\item \emph{Oracle}: the true simulated state is inserted into BMAP. This is a nonimplementable conditional reference as the performance upper bound, not a practical receiver.
\end{itemize}

\subsection{Performance Evaluation for Burst Communications}

\begin{figure*}[t]
\centering
    \begin{minipage}[b]{0.32\textwidth}
\centering \includegraphics[width=\textwidth]{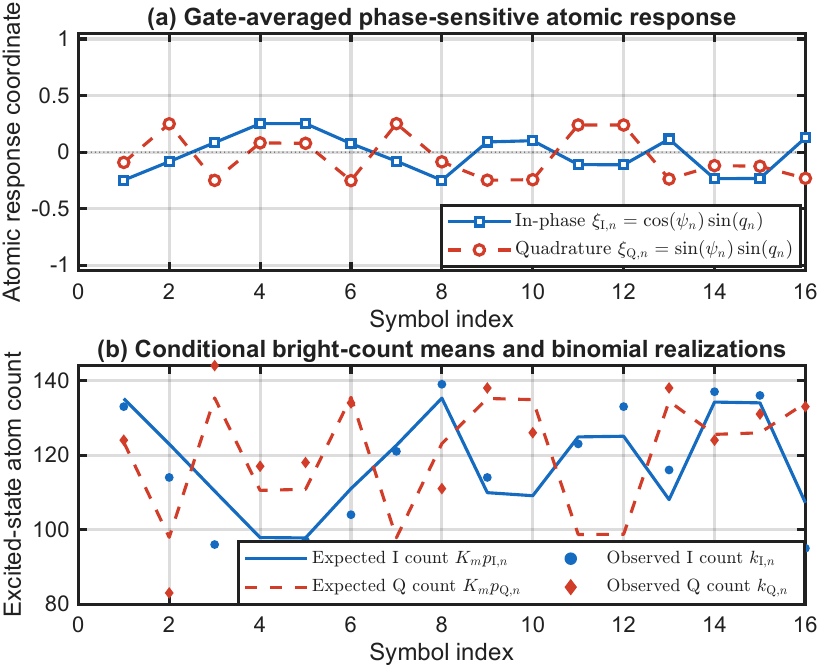}
\caption{Visualization for the reception of one burst packet.}
        \label{fig:end_to_end_chain_r6}
    \end{minipage}
\hfill
    \begin{minipage}[b]{0.32\textwidth}
\centering \includegraphics[width=\textwidth]{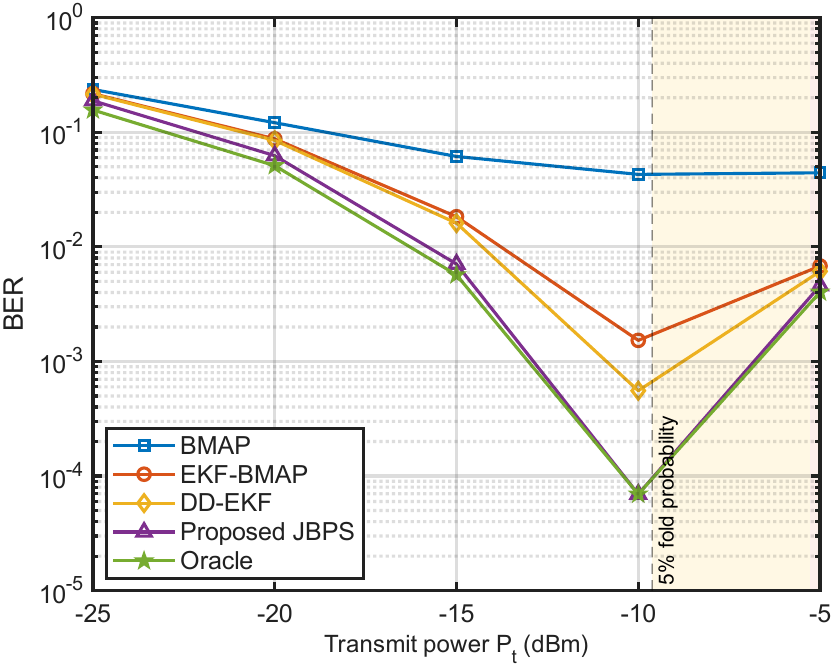}
\caption{BER versus transmit power for different signal detection algorithms.}
        \label{fig:power_ber_r6}
    \end{minipage}
\hfill
    \begin{minipage}[b]{0.32\textwidth}
\centering \includegraphics[width=\textwidth]{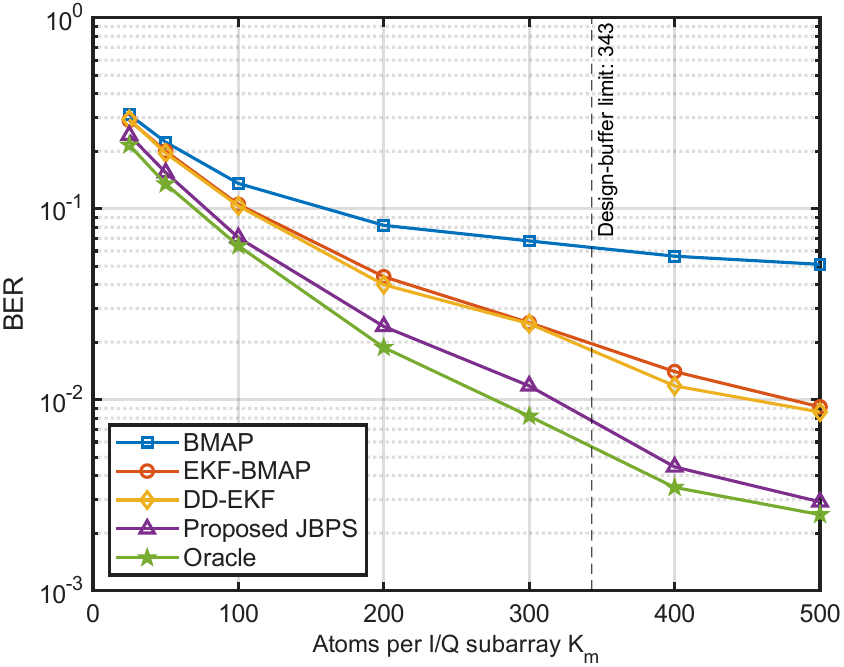}
\caption{BER versus atoms per I/Q subarray $K_m$ for different signal detection algorithms.}
        \label{fig:atom_ber_r6}
    \end{minipage}
    \label{fig:combined_group1}
\end{figure*}

Fig.~\ref{fig:end_to_end_chain_r6} visualizes the atomic observation generated by one burst, which connects the phase-sensitive quantum response to the count data actually available to the baseband detector. Fig.~\ref{fig:end_to_end_chain_r6}~(a) shows the noiseless response coordinates $\xi_{I,n}=\cos(\psi_n)\sin(q_n)$ and $\xi_{Q,n}=\sin(\psi_n)\sin(q_n)$ after the two analysis rotations. They are not conventional noisy I/Q voltage samples: the composite phase $\theta_n$ rotates the response pair, while the Rabi angle $q_n$ sets its nonlinear radial factor. Fig.~\ref{fig:end_to_end_chain_r6}~(b) maps these coordinates to the conditional means $K_mp_{I,n}$ and $K_mp_{Q,n}$ and compares them with the observed integer counts $k_{I,n}$ and $k_{Q,n}$. The symbol-to-symbol deviations of the counts from their conditional means are the finite-$K_m$ binomial fluctuations retained by the exact likelihood. For the illustrated $K_m=343$ and $N_{\rm b}=16$, a single-use packet requires $K_0=2K_mN_{\rm b}=10976$ initialized atoms that is comparable to the reported design-scale trapping capacity \cite{wang2026trapping}.

Fig.~\ref{fig:power_ber_r6} evaluates BER as a function of transmit power for different signal detection algorithms. In the low-power region, all detection schemes are primarily limited by finite-count uncertainty, and increasing $P_t$ improves the distinguishability of the binomial hypotheses. BMAP improves much more slowly because its frozen acquisition state cannot follow the accumulated phase drift. EKF-BMAP and DD-EKF reduce this mismatch, but both still make detection from a point estimate of the latent state. The proposed JBPS instead retains multiple state hypotheses and uses future burst observations through smoothing. Consequently, it approaches the Oracle curve and provides substantially lower BER than the frozen-state and plug-in baselines. The high-power BER upturn originates from the periodic Rabi transduction of the atom-array receiver. For the considered I/Q measurement, the radial response of the two binomial branches is proportional to $\sin(q_n)$, where $q_n=\kappa_{\rm a}|\bar{E}_n|\tau_{\rm int}$ is the instantaneous Rabi angle referred to the common incident-field plane. Within the first response lobe, i.e., $0\leq q_n<\pi/2$, increasing the incident field increases $\sin(q_n)$ monotonically and therefore improves the separation between the bit-conditioned count distributions. The point $q_n=\pi/2$ is referred to as the {first Rabi fold}, because the response slope $ \frac{\partial\sin(q_n)}{\partial q_n}=\cos(q_n)$ vanishes at this point and changes sign beyond it. Consequently, different Rabi angles, such as $q_n$ and $\pi-q_n$, can produce the same atomic response. Additional folds occur because $\sin(q_n)$ is periodic. Under block fading, the transmit power at which the first fold is reached is itself random. In particular, the vertical guide marks that $5\%$ of the channel realizations drive the maximum-radius symbol beyond the first Rabi fold. The shaded region indicates a progressively larger fold probability and is a dynamic-range constraint.

Fig.~\ref{fig:atom_ber_r6} presents BER versus atoms per I/Q subarray $K_m$ for different signal detection algorithms. Increasing $K_m$ reduces the normalized binomial variance, so the Oracle and JBPS curves continue to improve. The acquisition-only BMAP curve presents a much higher error floor because increasing the atom number suppresses observation noise but does not correct its frozen-state mismatch. The EKF-based receivers fall between these cases: their residual error is increasingly determined by state-estimation error rather than QPN. The proposed JBPS remains closest to the Oracle because it marginalizes payload uncertainty during filtering and smooths the state using the entire burst. The vertical line at $K_m=343$ marks the design-scale $K_0=10976$ single-use buffer boundary for the adopted $N_{\rm b}=16$ burst. Values to its right require a larger initialized buffer or explicit atom reinitialization/reuse and are not claimed as feasible single-use bursts under the same $K_0$.

\subsection{Performance Comparison for Different Receivers}

\begin{figure*}[t]
\centering
    \begin{minipage}[b]{0.32\textwidth}
\centering \includegraphics[width=\textwidth]{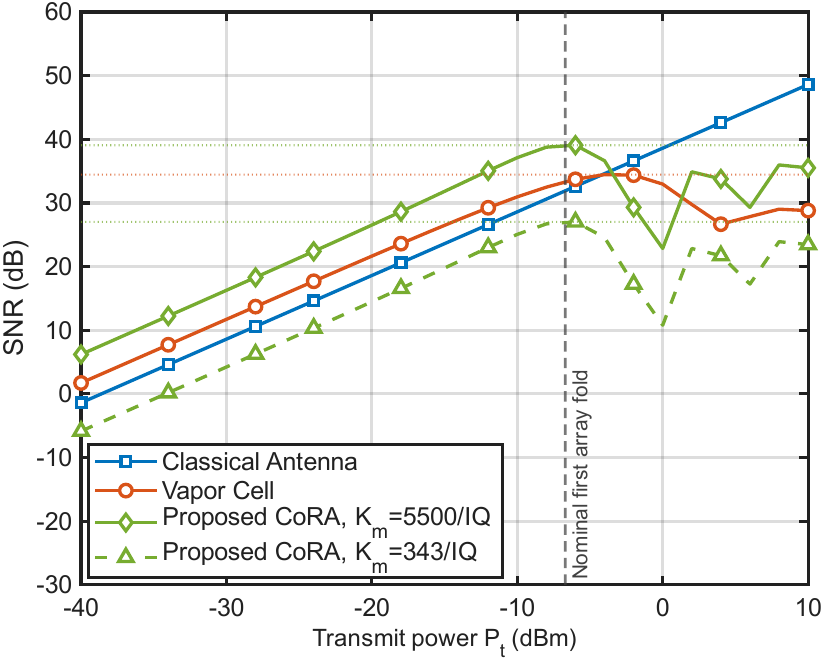}
\caption{SNR versus transmit power for different receivers.}
        \label{fig:strong_field_response}
    \end{minipage}
\hfill
    \begin{minipage}[b]{0.32\textwidth}
\centering \includegraphics[width=\textwidth]{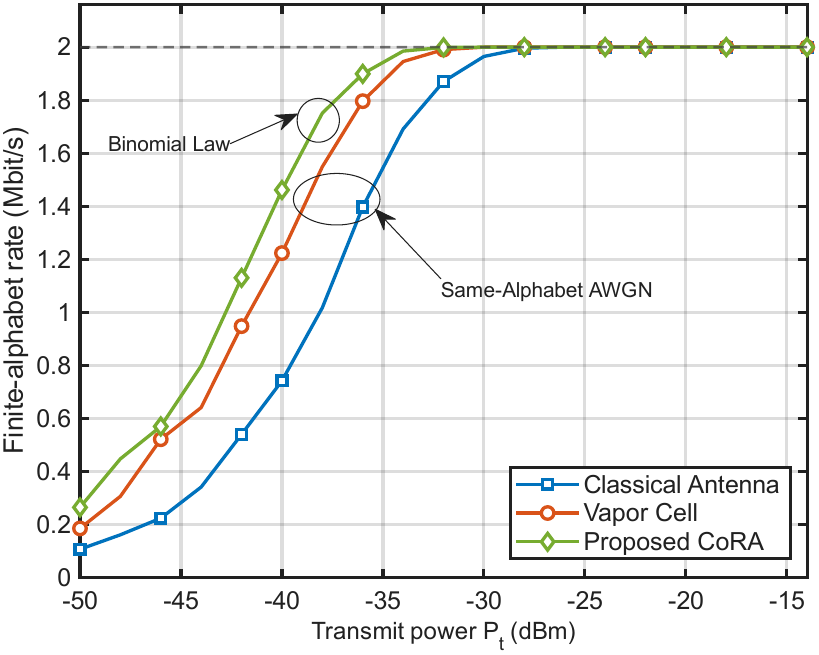}
\caption{Finite-alphabet rate versus transmit power for different receivers.}
        \label{fig:finite_alphabet_information}
    \end{minipage}
\hfill
        \begin{minipage}[b]{0.32\textwidth}
\centering \includegraphics[width=\textwidth]{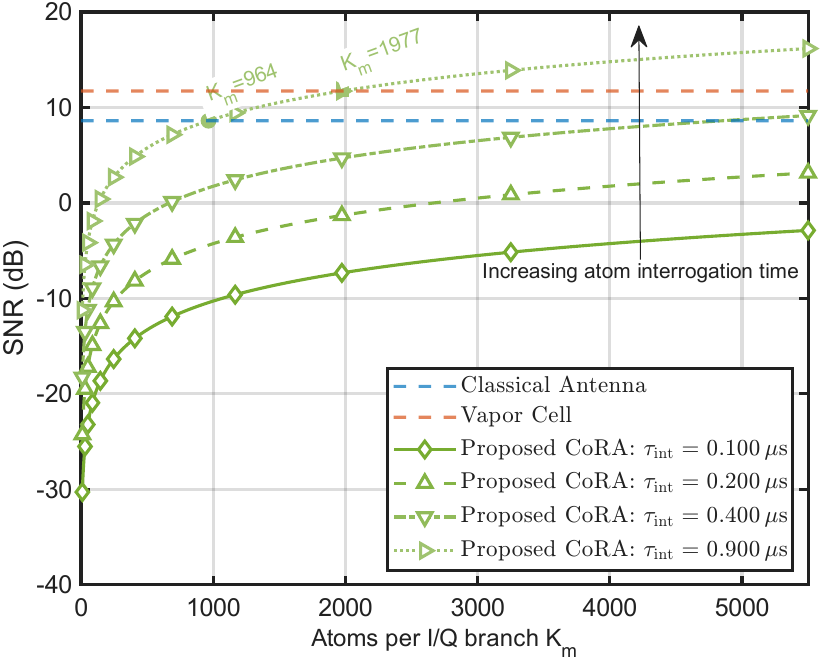}
\caption{SNR versus atoms per I/Q branch $K_m$ for different interrogation times $\tau_{\rm int}$.}
        \label{fig:integration_time_sensitivity}
    \end{minipage}

    \label{fig:combined_group2}
\end{figure*}

Fig.~\ref{fig:strong_field_response} compares the SNR performance versus transmit power for different receivers. In the weak-field region, all curves follow their locally linear trends. The array with $K_m=5500$ atoms per I/Q branch has the highest SNR because its finite-count uncertainty is lower, whereas the single-use $K_m=343$ array does not universally outperform the classical and vapor references. This distinction prevents an ideal large-array limit from being presented as the performance of the currently available packet buffer. The $K_m=5500$ curve denotes that one simultaneous I/Q measurement uses $11000$ atoms, showing the sensitivity potential of the large-scale atom array. The three receiver architectures exhibit fundamentally different strong-field behaviors in Fig.~\ref{fig:strong_field_response}. The classical receiver is kept linear over the displayed range because the adopted benchmark assumes an unsaturated antenna and RF front end. For the CoRA, the exact finite-count SNR retains the periodic factor $\sin(q)$. As $P_t$ increases, the atomic response first grows over $0\leq q\leq\pi/2$ and reaches its first-lobe maximum at $q=\pi/2$. Beyond this point, $\sin(q)$ folds back, becomes zero at $q=\pi$, and subsequently enters additional positive and negative Rabi lobes. The count-domain SNR therefore exhibits repeated peaks and nulls instead of continuing along its weak-field linear extrapolation. The vertical line in Fig.~\ref{fig:strong_field_response} denotes the nominal no-fading first-fold condition $\kappa_{\rm a}E_0(P_t,d)\tau_{\rm int} \max_{c\in\mathcal C}|c| =\frac{\pi}{2}.$ The vapor cell-based receiver employs a fixed-microwave-LO atomic-superheterodyne architecture. At larger signal fields, the signal is no longer a small perturbation to the fixed LO. The resulting change in atomic coherence causes response compression and eventual SNR roll-off. Accordingly, the results demonstrate both the weak-field sensitivity benefit of the atom array and the need for dynamic-range management at high incident powers.

Fig.~\ref{fig:finite_alphabet_information} compares the derived state-averaged finite-alphabet rate in Section~\ref{subsec:finite_alphabet_rate} for three receiver architectures under the same QPSK input alphabet and symbol interval. For the CoRA, the rate is obtained from the exact I/Q binomial transition law, whereas the classical-antenna and hot-vapor-cell benchmarks use equivalent complex-AWGN channels parameterized by their weak-signal SNRs. As the transmit power increases, the symbol-conditioned output distributions become more distinguishable, and the finite-alphabet rate correspondingly increases toward the common QPSK ceiling $H(X)/T_{\rm sym}=\log_2|\mathcal C|/T_{\rm sym}=2~\mathrm{Mbit/s}$ for $T_{\rm sym}=1~\mu\mathrm{s}$. Within the considered weak-field region, the CoRA reaches a given rate at a lower transmit power under the adopted receiver parameters, while the common high-SNR saturation is imposed by the finite QPSK alphabet rather than by identical noise statistics across the three receiver architectures.

Fig.~\ref{fig:integration_time_sensitivity} isolates the interaction-time dependence at $P_t=-30~\mathrm{dBm}$ for cold-atom arrays. In the local weak-field regime, Eq.~\eqref{eq:lemma1_snr} gives approximately $ \gamma_{\rm arr}^{\rm ws} \propto K_m A_{\rm eff}^{\,2}\tau_{\rm int}^{\,2},$ for fixed $B_{\rm sp}$ and incident field. A longer interaction therefore reduces the number of atoms required to cross a prescribed SNR. For $\tau_{\rm int}=0.9~\mu\mathrm{s}$, the annotated crossings occur at approximately $K_m=964$ for the classical reference and $K_m=1977$ for the optimized vapor reference. By contrast, the $0.1$- and $0.2$-$\mu\mathrm{s}$ curves do not reach these references within the plotted atom range. The gain from a longer interaction is not free, where it lowers the first-fold field, reduces the available foreground-control time, and constrains the symbol period.

\section{Conclusion}
\label{sec:conclusion}

In this paper, we investigated CoRA assisted wireless communication systems, where the practical operation is limited by destructive state-selective readout, nonlinear Rabi responses, and unknown phase-and-gain evolution. To address these challenges, we proposed an STM architecture that schedules preloaded atom subarrays over successive symbol intervals to enable gap-free interrogation within an admitted burst. Further, we proposed a JBPS detector that retains the exact binomial likelihood and averages bit decisions over phase-and-gain trajectories smoothed using the entire burst. We derived the weak-signal sensitivity, receiver-crossover conditions and finite-alphabet rate. Numerical results demonstrate that the proposed STM architecture effectively resolves the macroscopic temporal bottleneck for in-burst continuous sampling, while the proposed JBPS achieves superior BER performance over the considered baselines. \textbf{\textit{We revealed a sensitivity-sustainability scaling law: in the QPN-limited and atom-flux-limited regime, increasing the active atom number $K_m$ linearly improves the weak-signal SNR but reduces the sustainable packet repetition frequency as $K_m^{-1}$.}} Future work will exploit parallel subarray operation with pipelined atom replenishment to enhance the long-term sustainability of CoRA systems.

\appendices

\section{Proof of Lemma~\ref{lem:array_weak_field_nef}}
\label{app:proof_lemma1}

For the matched I/Q branches, the excitation probabilities can be expressed as
\begin{align}
p_I(E,\psi) &=\frac{B_{\rm sp}}{2} -\frac{A_{\rm eff}}{2}\cos\psi\sin q,\nonumber\\
p_Q(E,\psi) &=\frac{B_{\rm sp}}{2} -\frac{A_{\rm eff}}{2}\sin\psi\sin q,
\label{eq:app_branch_probabilities_compact}
\end{align}
where $q=\kappa_{\rm a}E\tau_{\rm int}$.

Since the two binomial counts are conditionally independent, their Fisher-information contributions add. The field-amplitude Fisher information is given by
\begin{equation}
J_E(E,\psi) =K_m\sum_{b\in\{I,Q\}} \frac{[\partial p_b(E,\psi)/\partial E]^2} {p_b(E,\psi)[1-p_b(E,\psi)]}.
\label{eq:app_fisher_sum_compact}
\end{equation}
Using $\sin q=q+\mathcal O(q^3)$, the required quantities at the zero-field operating point are given by
\begin{align}
&p_I(0,\psi)=p_Q(0,\psi) =\frac{B_{\rm sp}}{2},\nonumber\\
&\left.\frac{\partial p_I}{\partial E}\right|_{E=0} =-\frac{A_{\rm eff}\kappa_{\rm a}\tau_{\rm int}}{2}\cos\psi,\nonumber\\
&\left.\frac{\partial p_Q}{\partial E}\right|_{E=0} =-\frac{A_{\rm eff}\kappa_{\rm a}\tau_{\rm int}}{2}\sin\psi.
\label{eq:app_weak_field_terms_compact}
\end{align}
Substitution into Eq.~\eqref{eq:app_fisher_sum_compact} gives
\begin{align}
J_E^{\rm ws} &=\frac{K_m(A_{\rm eff}\kappa_{\rm a}\tau_{\rm int})^2} {B_{\rm sp}(2-B_{\rm sp})} \left(\cos^2\psi+\sin^2\psi\right)\nonumber\\
&=\frac{K_m(A_{\rm eff}\kappa_{\rm a}\tau_{\rm int})^2} {B_{\rm sp}(2-B_{\rm sp})},
\label{eq:app_fisher_result_compact}
\end{align}
which proves Eq.~\eqref{eq:lemma1_fisher} and its leading-order independence from the tracked phase. Finally, $\gamma_{\rm arr}^{\rm ws}=E^2J_E^{\rm ws}$ and $1/J_E^{\rm ws}=(\mathcal E_{\rm arr}^{\rm QPN})^2B_{\rm n}$ give Eqs.~\eqref{eq:lemma1_snr} and~\eqref{eq:lemma1_nef}, respectively.

\bibliographystyle{IEEEtran}
\bibliography{IEEEabrv,refs_RAA}

\end{document}